\theoremstyle{definition}\newtheorem{definition}{Definition}[section]
\newtheorem{theorem}[definition]{Theorem}
\theoremstyle{theorem}
\theoremstyle{theorem}\newtheorem{lemma}[definition]{Lemma}
\theoremstyle{theorem}
\theoremstyle{theorem}\newtheorem{corollary}[definition]{Corollary}
\theoremstyle{example}
\theoremstyle{remark}
\newcommand\nin{\not\in}
\newcommand\sym[1]{\mathsf{#1}}
\newcommand\pair[2]{\left\langle#1,#2\right\rangle}
\newcommand\set[1]{\left\{#1\right\}}
\newcommand{\setcom}[2]{\set{#1\left\vert\vphantom{#1}\,#2\right.}}
\newcommand\allocate[1]{\sym{alloc}\left(#1\right)}
\newcommand\dom[1]{\sym{dom}\left(#1\right)}
\newcommand\ol[1]{\overline{#1}}
\newcommand\tickle{\sqsubseteq_{\tick}}
\newcommand\heaple{\sqsubseteq}
\newcommand\labelfree{\sym{location}\text{-}\sym{free}}
\newcommand{\interm}{\sym{in}}
\newcommand{\caseterm}[5]{\sym{case}\, #1 \, \sym{of}\, \interm_1\,#2 . #3 ; \interm_2\,#4 . #5}
\newcommand\unit{\langle\rangle}
\newcommand\zero{0}
\newcommand\suc{\sym{suc}}
\newcommand{\true}{\sym{tt}}
\newcommand\fix{\sym{fix}}
\def\nothing{}
\newcommand\tick[1][\nothing]{
  \ifthenelse{\equal{#1}{\nothing}}{\checkmark}{
  \checkmark_{\hspace{-4pt}#1\hspace{1pt}}}}
\newcommand\lock{\sharp}
\newcommand\now{\sym{now}}
\newcommand\wait{\sym{wait}}
\newcommand\stay{\sym{stay}}
\newcommand\switch{\sym{switch}}
\newcommand\stay*{\sym{stay}'}
\newcommand\switch*{\sym{switch}'}
\newcommand\adv{\sym{adv}}
\newcommand\delay{\sym{delay}}
\newcommand\rbox{\sym{box}}
\newcommand\unbox[1][]{\sym{unbox}^{#1}}
\newcommand\out{\sym{out}}
\newcommand\into{\sym{into}}
\newcommand{\gc}[1]{\sym{gc}\left( #1 \right)}
\newcommand{\recN}[5]{\sym{rec}_{\Nat}(#1,#2\,#3.#4,#5)}
\newcommand\recU[7]{\sym{rec}_{\Until}(#1.#2,#3\,#4\,#5.#6,#7)}
\newcommand\recU*{\sym{rec}_{\Until}}
\newcommand{\event}[1]{\sym{Ev}(#1)}
\newcommand\event*{\sym{Ev}}
\newcommand\Delay{\ensuremath{{\bigcirc}}}
\newcommand\Later{\ensuremath{{\rhd}}}
\newcommand\stable{\sym{stable}}
\newcommand\limit{\sym{limit}}
\newcommand\Str[1]{\sym{Str}(#1)}
\newcommand\Event[1]{\sym{Ev}(#1)}
\newcommand\Nat{\sym{Nat}}
\newcommand\Fair[2]{\sym{Fair(#1,#2)}}
\newcommand\Fair*[2]{\sym{Fair'(#1,#2)}}
\newcommand\Unit{\sym{1}}
\newcommand\Fix{\sym{Fix}}
\renewcommand\Until{\mathrel{\mathcal U}}
\newcommand{\Finally}{\Diamond}
\newcommand\nats{{\mathbb N}}
\newcommand{\iso}{\cong}
\newcommand{\vinterpN}[2]{\mathcal{V}\llbracket #1 \rrbracket(#2)}
\newcommand{\tinterpN}[2]{\mathcal{T}\llbracket #1 \rrbracket(#2)}
\newcommand{\cinterpN}[2]{\mathcal{C}\llbracket #1 \rrbracket(#2)}
\newcommand\dinterp[1]{\llbracket #1 \rrbracket}
\newcommand\wfcxt[2][]{#2 \vdash^{#1} }
\newcommand\hastype[4][]{#2 \vdash^{#1} #3 : #4}
\newcommand\tokenfree[1]{\sym{token}\text{-}\sym{free}(#1)}
\newcommand\tickfree[1]{\sym{tick}\text{-}\sym{free}(#1)}
\newcommand\lockfree[1]{\sym{lock}\text{-}\sym{free}(#1)}
\renewcommand\state[2]{\left\langle #1;#2 \right\rangle}
\newcommand\stateS[3]{\left\langle #1;#2;#3 \right\rangle}
\newcommand\stateF[4]{\left\langle #1;#2;#3;#4 \right\rangle}
\newcommand\heval[4]{ \state{#1}{#2} \Downarrow \state{#3}{#4}}
\newcommand{\defeq}{\mathbin{\overset{\textsf{def}}{=}}}
\newcommand\locs{\sym{Loc}}
\newcommand\tsize[1]{\left| #1\right|}
\newcommand\state*[3]{\left\langle #1;#2;#3 \right\rangle}
\newcommand\forward[2]{\stackrel{#2\hphantom{#1}}{\Longrightarrow_{#1}}}
\newcommand\forward*[3]{\stackrel{#2/#3\hphantom{#1}}{\Longrightarrow_{#1}}}
\newcommand\forwards[1]{\forward{\sym{Str}}{#1}}
\newcommand\forwards*[2]{\forward*{\sym{Str}}{#1}{#2}}
\newcommand\forwardu[1]{\forward{\Until}{#1}}
\newcommand\forwardu*[2]{\forward*{\Until}{#1}{#2}}
\newcommand\forwardf[1]{\forward{\sym{F}}{#1}}
\newcommand\forwardf*[2]{\forward*{\sym{F}}{#1}{#2}}
\newcommand{\supported}{\sym{supported}}
\newcommand{\support}{\bowtie}
\newcommand{\nullstore}{\bullet}
\begin{document}

\title{Diamonds are not forever}

\subtitle{Liveness in reactive programming with guarded recursion}                     



\author{Patrick Bahr}
\affiliation{
  \institution{IT University of Copenhagen}            
  \country{Denmark}                    
}
\email{paba@itu.dk}          

\author{Christian Uldal Graulund}
\affiliation{
  \institution{IT University of Copenhagen}           
  \country{Denmark}                   
}
\email{cgra@itu.dk}         

\author{Rasmus Ejlers M\o{}gelberg}
\affiliation{
  \institution{IT University of Copenhagen}           
  \country{Denmark}                   
}
\email{mogel@itu.dk}         

\begin{abstract}
When designing languages for functional reactive programming (FRP) the
main challenge is to provide the user with a simple, flexible interface for writing 
programs on a high level of abstraction while ensuring that all programs can be 
implemented efficiently in a low-level language. To meet this challenge, a new 
family of modal FRP languages has been proposed, in which variants of Nakano's
guarded fixed point operator are used for writing recursive programs guaranteeing 
properties such as causality and productivity. As an apparent extension to this it 
has also been suggested to use Linear Temporal Logic (LTL) as a language for 
reactive programming through the Curry-Howard isomorphism, allowing 
properties such as termination, liveness and fairness to be encoded in types. 
However, these two ideas are in conflict with each other, since the 
fixed point operator introduces non-termination into the inductive
types that are supposed to provide termination guarantees.

In this paper we show that by regarding the modal time step operator of LTL
a submodality of the one used for guarded recursion (rather than equating them), 
one can obtain a modal type
system capable of expressing liveness properties while retaining the power of the guarded fixed 
point operator. We introduce the language Lively RaTT, a modal FRP language
with a guarded fixed point operator and an `until' type constructor as in LTL, and 
show how to program with events and fair streams. Using a step-indexed Kripke
logical relation we prove operational properties of Lively RaTT including productivity 
and causality as well as the termination and liveness properties expected of types 
from LTL. Finally, we prove that the type system of Lively RaTT guarantees the 
absence of implicit space leaks. 
\end{abstract}

%

\ccsdesc[500]{Software and its engineering~Functional languages}
\ccsdesc[500]{Software and its engineering~Data flow languages}
\ccsdesc[300]{Software and its engineering~Recursion}
\ccsdesc[300]{Theory of computation~Operational semantics}

\keywords{Functional reactive programming, Modal types, Linear Temporal Logic, 
  Synchronous data flow languages, Type systems}  


\maketitle
%

\section{Introduction}
\label{sec:introduction}

Reactive programs such as servers and control software in cars,
aircrafts and robots are traditionally written in imperative languages
using a wide range of complex features including call-backs and
shared state. For this reason, they are notoriously error-prone and
hard to reason about. This is unfortunate, since much of the most
critical software currently in use is reactive. The goal of functional
reactive programming (FRP) is to provide the programmer with tools for
writing reactive programs on a high level of abstraction in the
functional paradigm. In doing so, FRP extends the known benefits of
functional programming also to reactive programming, in particular
modularity and equational reasoning for programs. The challenge for
achieving this goal is to ensure that all programs can be implemented
efficiently in a low-level language.

From the outset, the central idea of FRP~\citep{FRAN} was that
reactive programming simply is programming with signals and events. While elegant, this idea
immediately leads to the question of what the interface for signals
and events should be. A naive approach would be to model signals as
streams in the sense of coinductive solutions to
$\Str{A} \iso A \times \Str A$, but this allows the programmer to write \emph{non-causal}
programs, i.e., programs where the present output depends on future input. Arrowised
FRP~\citep{nilsson2002}, as implemented in the Yampa library for
Haskell, solves this problem by taking signal functions as primitive
rather than signals themselves.  However, this approach forfeits some
of the simplicity of the original FRP model and reduces its
expressivity as it rules out useful types such as signals of signals.

More recently, a number of 
authors~\citep{jeffrey2014,krishnaswami2011ultrametric,krishnaswami2012higher,krishnaswami13frp,
jeltsch2013temporal,bahr2019simply} have suggested a modal approach to FRP in which causality
is ensured through the introduction of a notion of time in the form of a modal operator. In this approach, 
an element of the modal type $\Later A$ should be thought of as data of type $A$ arriving in the next time step. 
Signals should be modelled as a type of streams satisfying the type isomorphism $\Str{A} \iso A \times \Later \Str A$ capturing the
idea that each pair of elements of a stream is separated by a time step. Events carrying data of type $A$ 
can be represented by a type satisfying $\Event A \iso A + \Later \Event A$, stating
that an event can either occur now, or at some point in the future. Types such as $\Str A$ and $\Event A$ 
satisfying type equations in which the recursion variable is guarded by a $\Later$ are referred to as 
\emph{guarded recursive types}. Combining this with guarded recursion
\citep{nakano2000} in the form of a fixed point operator of type $(\Later A \to A) \to A$ gives a powerful type system for 
reactive programming guaranteeing not only causality, but also productivity, i.e. the property that for a closed
stream, each of its elements can always be computed in finite time. In some 
systems~\citep{krishnaswami13frp,bahr2019simply,krishnaswami2012higher}
the modal types have also been used to guarantee the lack of implicit space leaks, 
i.e., the problem of programs holding on to 
memory while continually allocating until they run out of space. These leaks have previously
been a major problem in FRP. 

\citet{jeffrey2012} suggested taking this idea further using Linear Temporal Logic (LTL) \citep{LTL} as a
type system for FRP through the Curry-Howard isomorphism, a connection discovered independently by
\citet{Jeltsch2012}. This idea is not only conceptually 
appealing, but could also extend the expressivity of the type system considerably and have practical consequences.
Indeed, LTL has a step modality $\Delay$ similar to $\Later$ used to express that a formula should
be true one time step from now. It also has an operation $\Box$ expressing global truth, i.e.,
formulas that hold now and at any time in the future. This operation has been used by 
\citet{krishnaswami13frp} to express 
time-independent data that can be safely kept across time steps without causing space leaks. In this paper
we are particularly interested in the \emph{until} operator $\phi \Until \psi$ of LTL, which expresses that $\phi$
holds now and for some more steps, after which $\psi$ becomes
true. Using this operator, we can encode the 
\emph{finally} operator $\Finally \phi$ as $\true \Until \phi$ stating that $\phi$ will eventually become true. 
In programming terms, the until operator is the inductive type given by constructors
\begin{align*}
 \now & :  A \to A \Until B & \wait & : A \to \Delay (A \Until B) \to A \Until B \, .
\end{align*}
and the fact that it is inductive should imply a termination property similarly to that of LTL: 
Elements of type $A \Until B$ will eventually produce a $B$ after at most finitely many $A$s, 
and similarly for elements of type $\Finally B$. In programming, this can be used to express the property
that a program will eventually produce an output, e.g., by timeout, or one can give a type of fair schedulers
\citep{cave14fair}, see \autoref{sec:fair:streams} for details. 

The goal of this paper is to define a language combining the expressive type system of LTL 
with the power of the guarded recursive fixed point combinator. Unfortunately, equating 
$\Delay$ and $\Later$ in such a system breaks the termination guarantee of the $\Until$ type. For example,
if $a : A$, the fixed point of $\wait \, a : \Delay(A \Until B) \to A
\Until B$ will never produce a $B$. This is an example of a well-known
phenomenon: The guarded fixed point combinator implies uniqueness of solutions 
to guarded recursive type
equations like $X \iso B + A \times \Later X$, and so inductive and coinductive solutions 
coincide. 
In fact, the solutions behave more like coinductive types than inductive types and can even be
used to encode coinductive types \citep{atkey2013productive} in some settings. 

This observation led \citet{cave14fair} to suggest removing the guarded recursive fixed point operator
from FRP in order to distinguish between inductive and coinductive guarded
types. This has the unfortunate effect of
losing the power and elegance of the guarded fixed point operator for programming with
coinductive types, which ought to be safe. 
Indeed it is well known that programming directly with coiteration 
is cumbersome and so most programming languages
allow the programmer to construct elements of coinductive types using recursion. To
guarantee productivity, one must use either the (non-modular) syntactic checks
used in most proof assistants today, or sized types~\citep{Abel:Wellfounded,Abel:NBE:sized:types,
Sacchini13,HughesPS96}. Given that the modal operator is in the 
language, guarded recursion is the most obvious solution to guaranteeing productivity.

\subsection{Overview of results}

In this paper we show that by considering $\Delay$ a submodality of $\Later$, rather than equating them,
we can use the guarded fixed point operator while retaining the termination guarantees of $\Until$. 
Using $\Later$, the type $\Event A$ of possibly occurring events of type $A$ can be encoded as 
the unique solution to $\Event A \iso A + \Later \Event A$. Using $\Delay$, the type $\Finally A$ 
of events of type $A$ that must occur can be encoded as above. We will often refer to these as
the types of possibly non-terminating and terminating events, respectively. The inclusion from $\Delay$
into $\Later$ can be used to type an inclusion of $\Finally A$ into $\Event A$. 
%
The lack of an
inclusion from $\Later$ to $\Delay$ means that there is no inclusion $\Later \Finally A \to \Finally A$
to take a fixed point of to construct a diverging element of $\Finally A$. 

To make these ideas concrete we define the language \emph{Lively RaTT} (\autoref{sec:live:ratt}) as 
an extension of the language Simply RaTT~\citep{bahr2019simply}. Simply RaTT is an FRP language
with modal operators $\Later$ and $\Box$ as described above, as well as guarded recursive types
and guarded fixed points. It uses a Fitch-style approach~\citep{Fitch:Symbolic,Clouston:fitch-2018,
clouston2018modal} to programming with modal types, which means that the typing rules for 
introduction and elimination for modal types add and remove tokens from a context.
%
This gives a direct style for programming
with modalities, avoiding let-expressions as traditionally used for elimination. 
Lively RaTT has tokens $\tick[\Later]$ and $\tick[\Delay]$ for $\Later$ and $\Delay$, respectively, and 
the inclusion of $\Delay$ into $\Later$ is defined by allowing $\tick[\Later]$ to eliminate also $\Delay$.
We think of $\tick[\Delay]$ and $\tick[\Later]$ as a separation in time in judgements: Variables to the left of $\tick[\Delay]$ or $\tick[\Later]$ are 
available one time step before those to the right. The token $\tick[\Later]$ is a stronger time step, 
allowing also recursive definitions to be unfolded. 
We illustrate the expressivity of Lively RaTT by showing how to program with events and fair streams
in \autoref{sec:examples}. 

We define two kinds of operational semantics for Lively RaTT (\autoref{sec:op:sem}): An evaluation semantics 
reducing terms to values at each time instant, and a step semantics capturing the dynamic behaviour of reactive 
programs over time. The latter is defined for streams, 
$\Until$-types, and fair streams only.
We prove causality and productivity of streams, and we prove the termination property
for $\Until$-types, i.e., that any term
of type $A \Until B$ eventually produces a $B$, also in a context
of a stream of external inputs. Using this, we prove that any term of the fair scheduler type can be 
unwound to a fair interleaving of streams, again also in a context of external input. 

Finally we show that the type system of Lively RaTT guarantees the lack of implicit space leaks.
Our results on this extend those proved for Simply RaTT by 
\citet{bahr2019simply} which in turn
were based on a technique developed by~\citep{krishnaswami13frp}. More precisely, our operational
semantics stores input as well as delayed computations in a heap, and we show that it is safe to 
garbage collect the elements in the heap after two evaluation steps. 

These results are proved (\autoref{sec:metatheory}) using an interpretation of types as sets of 
values indexed by four parameters, including an ordinal $\beta$. For finite $\beta$, this index should be thought of
as a form of step-indexing: The interpretation of $A$ at $\beta$ in this case describes the behaviour of
terms up to the first $\beta$ evaluation steps. In our model, however, $\beta$ runs all the way to $\omega\cdot 2$.
The interpretation at higher $\beta$, in particular the limit ordinal $\omega$ describes
global behaviour of programs. 

The distinction between $\Later$ and $\Delay$ can be seen in the model. At successor ordinals $\beta +1$,
the interpretation of $\Later A$ and $\Delay A$ are both defined in terms of the interpretation of $A$ at $\beta$
in a step-indexed fashion \citep{ToT}, but at limit ordinals $\beta$, the interpretation of $\Later A$ is the intersection 
of the interpretations at $\beta'<\beta$, whereas $\Delay A$ is interpreted using the interpretation of $A$ at 
$\beta$. This interpretation of $\Later A$ is needed to interpret fixed points, and the interpretation of 
$\Delay A$ ensures that the interpretation of $A \Until B$ behaves globally as an inductive type.  

The paper ends with an overview of related work
(\autoref{sec:related:work}) and conclusions, perspectives and future
work (\autoref{sec:conclusion}). The accompanying technical appendix
contains the full proofs of our results.

\section{Lively RaTT}
\label{sec:live:ratt}

Lively RaTT is an extension of Simply RaTT~\citep{bahr2019simply}, a
Fitch-style modal language for reactive programming. 
This section gives an overview
of the language, referring to \autoref{fig:typing} for an overview of the typing rules.

\begin{figure}[tbp]
\begin{center}
\begin{align*}
\text{\textbf{Types}} &&  A, B &:: = \alpha \mid \Unit \mid \Nat \mid A \times B \mid A + B \mid A \to B \mid \Box A \mid \Delay A \mid \Later A 
  \mid \Fix\;\alpha.A \mid A \Until B \\
\text{\textbf{Stable types} } && S, S' &:: = \Unit \mid \Nat \mid \Box A \mid S \times S' \mid S + S' \\
%
 \text{\textbf{Limit types}} && L, L' &:: = \alpha \mid \Unit \mid \Nat \mid \Later A \mid \Delay L \mid \Box L \mid L \times L' \mid L + L' \mid A \to L \mid \Fix\;\alpha.L
\end{align*}
%
\caption{Grammars for types, stable types and limit types. In typing rules, only closed types (no free $\alpha$) are considered.}
\label{fig:stable:and:limit}
\end{center}
\end{figure}

\begin{figure}[tbp]
\begin{center}
\begin{mathpar}
  \inferrule*
  {~}
  {\wfcxt{\cdot}}
  \and
  \inferrule*
  {\wfcxt{\Gamma} \quad x \nin \dom{\Gamma}}
  {\wfcxt{\Gamma,x:A}}
  \and
  \inferrule*
  {\wfcxt{\Gamma} \quad \lockfree{\Gamma}}
  {\wfcxt{\Gamma,\lock}}
  \and
  \inferrule*
  {\wfcxt{\Gamma} \quad 
  \lock \in \Gamma \quad m \in \{\Delay,\Later\} \quad \tickfree{\Gamma}}
  {\wfcxt{\Gamma,\tick[m]}}
\end{mathpar}
\caption{Well-formed contexts.}
\label{fig:wf:cxt}
\end{center}
\end{figure}

\begin{figure}
  \textbf{Simply typed $\lambda$-calculus:}
\begin{mathpar}
  \inferrule*
  {
   \tokenfree{\Gamma'} \lor A \; \stable}
  {\hastype{\Gamma,x:A,\Gamma'}{x}{A}}
  \and
  \inferrule*{~}
  {\hastype{\Gamma}{\unit}{\Unit}}
  \and
  \inferrule*
  {\hastype{\Gamma,x:A}{t}{B} \\ \tickfree{\Gamma}}
  {\hastype{\Gamma}{\lambda x.t}{A \to B}}
  \and
  \inferrule*
  {\hastype{\Gamma}{t}{A \to B} \\
    \hastype{\Gamma}{t'}{A}}
  {\hastype{\Gamma}{t\,t'}{B}}
  \and
  \inferrule*
  {\hastype{\Gamma}{t}{A} \\
    \hastype{\Gamma}{t'}{B}}
  {\hastype{\Gamma}{\pair{t}{t'}}{A \times B}}
  \and
  \inferrule*
  {\hastype{\Gamma}{t}{A_1 \times A_2} \\ i \in \{1, 2\}}
  {\hastype{\Gamma}{\pi_i\,t}{A_i}}
  \and
  \inferrule*
  {\hastype{\Gamma}{t}{A_i} \\ i \in \{1, 2\}}
  {\hastype{\Gamma}{\interm_i\, t}{A_1 + A_2}}
  \and
  \inferrule*
  {\hastype{\Gamma,x: A_i}{t_i}{B}
    \\ \hastype{\Gamma}{t}{A_1 + A_2}
    \\ i \in \{1,2\}}
  {\hastype{\Gamma}{\caseterm {t}{x}{t_1}{x}{t_2}}{B}}
  \and
    \inferrule*
  {~}
  {\hastype{\Gamma}{\zero}{\Nat}}
  \and
  \inferrule*
  {\hastype{\Gamma}{t}{\Nat}}
  {\hastype{\Gamma}{\suc \, t}{\Nat}}
  \and
  \inferrule*
  {\hastype{\Gamma}{s}{A} \\ \hastype{\Gamma,x:\Nat,y:A}{t}{A}
  \\ \hastype{\Gamma}{n}{\Nat}}
  {\hastype{\Gamma}{\recN{s}{x}{y}{t}{n}}{A}}

\end{mathpar}
\noindent
 \textbf{Modalities, $\Until$-types, guarded recursion:}
\begin{mathpar}
  \inferrule*
  {\hastype{\Gamma,\tick[m]}{t}{A}} 
  {\hastype{\Gamma}{\delay\,t}{m\; A}}
  \and
  \inferrule*
  {\hastype{\Gamma}{t}{m\; A} \\ 
    m \leq m' \lor A \; \limit}
  {\hastype{\Gamma,\tick[m'],\Gamma'}{\adv\,t}{A}}
  \and
  \inferrule*
  {\hastype{\Gamma}{t}{\Box A} }
  {\hastype{\Gamma,\lock,\Gamma'}{\unbox\,t}{A}}
  \and
  \inferrule*
  {\hastype{\Gamma,\lock}{t}{A}}
  {\hastype{\Gamma}{\rbox\,t}{\Box A}}
  \and
  \inferrule*
  {\hastype{\Gamma}{t}{B}}
  {\hastype{\Gamma}{\now\,t}{A \Until B}}
 \and
  \inferrule*
  {\hastype{\Gamma}{s}{A} \\ \hastype{\Gamma}{t}{\Delay(A \Until B)}}
  {\hastype{\Gamma}{\wait\,s\,t}{A \Until B}}
  \and
  \inferrule*
  {\hastype{\Gamma,\lock,x:B}{s}{C}\\\hastype{\Gamma,\lock,x: A, y:
      \Delay (A \Until B), z : \Delay C}{t}{C}\\
    \hastype{\Gamma,\lock,\Gamma'}{u}{A \Until B}}
  {\hastype{\Gamma,\lock,\Gamma'}{\recU{x}{s}{x}{y}{z}{t}{u}}{C}}
  \and
  \inferrule*
  {\hastype{\Gamma,x : \Box\Later A,\lock}{t}{A}}
  {\hastype{\Gamma}{\fix \; x.t}{\Box A}}
  \and
  \inferrule*
  {\hastype{\Gamma}{t}{\Fix \;\alpha.A}}
  {\hastype{\Gamma}{\out\,t}{A[\Later(\Fix \; \alpha.A)/\alpha]}}
  \and
    \inferrule*
  {\hastype{\Gamma}{t}{A[\Later(\Fix\,\alpha.A)/\alpha]}}
  {\hastype{\Gamma}{\into\,t}{\Fix\,\alpha.A}}
\end{mathpar}

  \caption{Typing rules. Here $m,m'$ ranges over the set $\{\Delay,\Later\}$ of time modalities ordered by $\Delay\leq \Later$.
  In all rules, all contexts are assumed well-formed.}
  \label{fig:typing}
\end{figure}

In the Fitch-style approach to modal types the 
introduction and elimination rules for these add and remove 
tokens from a context.
For example, the modality $\Delay$ expresses
delay of data by one time step and has introduction and elimination rules as follows
(ignoring $\Later$ for the moment).
\begin{mathpar}
  \inferrule*
   {\hastype{\Gamma,\tick[\Delay]}{t}{A}} 
  {\hastype{\Gamma}{\delay\,t}{\Delay A}}
  \and
  \inferrule*
  {\hastype{\Gamma}{t}{\Delay A}}
  {\hastype{\Gamma,\tick[\Delay],\Gamma'}{\adv\,t}{A}}
\end{mathpar}
The token $\tick[\Delay]$ should be thought of as a separation by a single time step between the variables to the left of it 
and the rest of the judgement to the right. Thus the premise of the
introduction rule states that $t$ has type $A$ one time step after $\Gamma$, and so 
$\delay\,t$ has type $\Delay A$ at the time of $\Gamma$. Similarly, in the conclusion of the elimination rule,
one time step has passed since the premise, so at that time $t$ can be advanced to give an
element of type $A$. This gives a direct approach to programming with modalities, as opposed to
the more standard let-expressions. For example, delayed application of functions can be typed as
\begin{equation}
  \label{eq:delayApp}
  \lambda f . \lambda x . \delay((\adv \, f) (\adv\, x)) : \Delay(A \to B) \to \Delay A \to \Delay B
\end{equation}

Unlike Simply RaTT, Lively RaTT has two modalities for time delays: $\Delay$ and $\Later$. 
Both correspond to a time step in the execution of reactive programs, but in addition, $\Later$ 
corresponds to a time step in the sense of guarded recursion. Consequently, the $\tick[\Later]$ token is 
stronger than $\tick[\Delay]$: Both can be used to advance time, but $\tick[\Later]$ can also
be used to unfold fixed points. We capture this extra strength in a reflexive ordering 
generated by $\Delay \leq \Later$ on delay modalities, and allowing $\tick[m']$ to eliminate 
modality $m$ if $m \leq m'$. This induces an inclusion 
\begin{equation}
  \label{eq:embed}
  \Varid{embed} = \lambda x. \delay(\adv\, x) : \Delay A \to \Later A
\end{equation}
for all $A$. In
general there is no inclusion in the opposite direction, except for a class of special types 
which we refer to as limit types, defined in Figure~\ref{fig:stable:and:limit}. The terminology
refers to the step indexed interpretation of types, see \autoref{sec:metatheory}.

The tokens $\tick[\Later]$ and $\tick[\Delay]$ are collectively
referred to as \emph{ticks} and the rules in Figure~\ref{fig:wf:cxt}
stipulate that there may be at most one tick in a context. This means
that programs can refer to data from the present and previous time
step, but not from earlier time steps. This is a crucial restriction
that rules out implicit space leaks, and similar restrictions can be
found in many other modal languages for
FRP~\citep{krishnaswami13frp,bahr2019simply,cave14fair}.

The second kind of token in Lively RaTT is $\lock$, which separates the context into static
variables to the left of $\lock$ and dynamic variables to the right. Static variables are 
time-independent whereas the dynamic ones can depend on reactive data available only
in the current instant. 
%
This distinction is only made once, so there can be at most one $\lock$ in a context. 
The notion of time step is relevant only for dynamic variables, and therefore tokens $\tick[\Delay]$ and $\tick[\Later]$ 
can only appear to the right of a $\lock$. The rules for well-formed contexts can be found in 
Figure~\ref{fig:wf:cxt}.

The token $\lock$ is associated with the modality $\Box$. 
Data of type $\Box A$ should be thought of as stable data, i.e., data that does not depend on 
time-dependent dynamic data, and can thus be safely transported into the future without causing 
space leaks. This is reflected in the introduction rule for $\Box$  
which ensures that $\rbox\, t$ can not contain free dynamic variables 
(i.e. variables to the right of $\lock$), and in the elimination rule allowing $\hastype{\Gamma}{t}{\Box A}$ 
to be eliminated in a context $\Gamma, \lock, \Gamma'$ also when $\Gamma'$ contains a tick.

Stable types (Figure~\ref{fig:stable:and:limit}) are types whose values by nature cannot contain 
time-dependent data, and so can be used in any dynamic context. This is implemented in the 
language by allowing variables of stable types to be introduced also over tokens. Generalising this
to all variables would lead to space leaks. 
Note in particular that function types
are not stable since closures can contain time-dependent data.

Elements of type $\Box A$ can also be constructed as guarded recursive fixed points. These are 
particularly useful for programming with guarded recursive types, i.e., types of the form 
$\Fix \;\alpha.A$ satisfying the type isomorphism $\Fix \;\alpha.A\iso A[\Later(\Fix \; \alpha.A)/\alpha]$.
Note that there is no restriction on $A$, which can in principle contain also negative occurrences of $\alpha$,
although we shall not be using that in this paper. The basic FRP types of streams and events 
can be encoded as guarded recursive types
\begin{align*}
 \Str A & \defeq \Fix \;\alpha.A\times \alpha &
 \Event A & \defeq \Fix \;\alpha.A + \alpha 
\end{align*}

The fixed point combinator as defined by \citet{nakano2000} is simply a term of type $(\Later A \to A) \to A$. 
In FRP a few adjustments
must be made to that. First of all, a fixed point will be called repeatedly at different dynamic times.
To avoid space leaks, fixed points should therefore not have free dynamic variables (although the recursion variable
itself should be dynamic), and the type of the fixed point should be of the form $\Box A$. In Simply RaTT, the typing rule
for fixed points states that $\hastype{\Gamma}{\fix \; x.t}{\Box A}$ if 
$\hastype{\Gamma,\lock, x : \Later A}{t}{A}$. In Lively RaTT this is too restrictive, since it prohibits
nesting of guarded fixed points and recursion over elements of the until-types $A \Until B$. 
The premise of the rule is therefore $\hastype{\Gamma,x : \Box\Later A,\lock}{t}{A}$, which gives a more general
fixed point rule. 

As an example, mapping of functions over streams can be defined using fixed points as 
\mapIntro
where $::$ refers to the infix constructor for streams, which in the example is also used for pattern matching. 
Note that the input function $f$ has type $\Box (A \to B)$ since it must be called at all futures. 

Lively RaTT features two kinds of inductive types. The first is the natural numbers with essentially the 
standard typing rules for $\zero$, $\suc$ and recursion. Note that these apply in any context $\Gamma$ independent
of which tokens are in $\Gamma$. The second is the until-type of LTL, to be thought of as the inductive 
solution to $A \Until B \iso B + A \times \Delay(A \Until B)$. As for the natural numbers, there is no restriction on the context for
the introduction rules, but the elimination rule is by nature dynamic, since elimination of an element of 
$A \times \Delay(A \Until B)$ should recurse one time step from now on the advanced element of type $A \Until B$. 
To avoid space leaks, the recursors should be stable, i.e., not depend on dynamic data. Thus eliminating from
$A \Until B$ into a type $C$ requires recursors of type $\Box (B \to
C)$ and $\Box(A \to \Delay (A \Until B) \to \Delay C \to C)$. 

Finally, note that Lively RaTT is a higher-order functional programming language with the restriction that lambda
abstraction is only allowed in contexts with no $\tick[\Delay]$ and $\tick[\Later]$. This restriction is inherited from
Simply RaTT where it is necessary to guarantee the lack of space leaks. As we shall see, this appears not
to be a limitation in practice. 

\section{Programming in Lively RaTT}
\label{sec:examples}

This section gives a number of examples of programming in Lively RaTT.
First, we give a series of examples of programming with events.
Secondly, we show how to encode \emph{fairness} and how to implement a
fair scheduler.

\subsection{Events and diamond}
As described in the introduction, events that \emph{may} occur can be encoded
in Lively RaTT as $\event*\,{A} \defeq \Fix\,\alpha.A + \alpha$. For
example, the event that loops forever can be defined as
\loopEvent

The type of events that \emph{must} occur can be encoded as the 
diamond modality from LTL, namely $\Diamond(A) \defeq 1 \Until A$.
Below we will use the following shorthand when working with $\event*$ and
$\Diamond$:\\
\begin{minipage}[c]{0.35\linewidth}
\shorthandDia
\end{minipage}
\begin{minipage}[c]{0.65\linewidth}
\shorthandEv
\end{minipage}
Here the $\now_\Diamond$ and $\now_{\event*}$ maps are like the
$\Varid{return}$ map from a monad. Both $\event*$ and $\Diamond$
further admit a map reminiscent of the $\Varid{bind}$ map of a monad.
For $\event*$ this is given by:
\evBind
where $\circledast$ is the infix notation of the delayed function call
as defined in \autoref{eq:delayApp}, which can be given the more
general type $m_1(A \to B) \to m_2(A) \to m_3(B)$ for
$m_i \in \{\Delay,\Later\}$ with $m_1,m_2 \leq m_3$. To see that
$\Varid{bind}_\sym{Ev}$ is well-typed, consider the two cases. In the
first case $a : A$ and hence, the unboxed $f$ can be applied
immediately. In the second case $e : \Later \event*\, A $ and
$b : \Box\Later (\event*\, A \to \event*\,B )$. It then follows that
$\unbox \, b : \Later (\event*\, A \to \event*\,B )$ and thus, by a
delayed function application,
$(\unbox \, b) \circledast e : \Later \event*\,B$. This is then
wrapped in $\wait_{\event*}$ to produce an element of $\event*\,B$ as
needed. Note the requirement for the map $f : A \to \event*\,{B}$ to
be stable. This is because it might be applied \emph{in the future}.

For $\Diamond$, we define the map
\diaBind
where again $f$ must be stable. To see that $bind_\Diamond$ is well
typed, consider the base and recursion case. In the base $a : A$ and
hence, the unboxed $f$ can be applied immediately. In the recursion
case $u : \Unit, w : \Delay(A \Until B)$ and $d : \Delay\Diamond{B}$,
hence also $\wait_\Diamond\,d : \Diamond{B}$ as required.

We will also use sugared syntax for recursive definitions, writing
e.g. the above definition of
$\Varid{bind}_\sym{Ev}$ as
\evBindSugared
The $\lock$ separates the variables into those received before and after
$\fix$, and since the two cases define $\Varid{bind}_\sym{Ev}\,f$ by guarded
recursion, this should be considered an atomic subexpression with type
$\Box\Later(\event*\,A \to \event*\,B )$.

Similarly, the definition of $\Varid{bind}_\Diamond$ can be written in
the sugared syntax as
\diaBindSugared
Here, the two cases of the recursive definition of $\Varid{bind}'_\Diamond$ 
are written as pattern matching syntax. In the second case the subterm
$\Varid{bind}^\prime_\Diamond \, d$ represents the recursive call and
should therefore be read as having type $\Delay\Diamond B$. To
elaborate such definitions back into $\sym{rec}_{\Until}$, replace
calls such as $\Varid{bind}^\prime_\Diamond \, d$ with a fresh
variable that represents the call to the recursor. We chose to use the
above style to make it clear when delayed arguments are used and how
they are passed around.

Since $\Diamond$ represents events that must occur, and $\event*$
represents more general, possibly occurring, events there is an
inclusion from $\Diamond$ to $\event*$. Using the above syntax, this
can be defined by $\Until$-recursion as
\diamondInclusion
This map makes crucial use of the fact that $\Delay$ is a sub-modality
of $\Later$ in the call to $\Varid{embed}$, as
defined in \autoref{eq:embed}. 

A further consequence of the sub-modality relation is that
non-terminating events ``overrule'' terminating events. Consider
$\event*$ containing a $\Diamond$:
\diamondEvent
The converse, a function with type
$\Diamond\event*\,A \to \Diamond\,A$, can not be written in the
language, since the inner event may be non-terminating.

There is in general no inclusion the from $\event*$ into $\Diamond$
because of the requirement that elements of $\Diamond A$ terminate.
One solution is to wrap the conversion in a timeout, which will handle
the non-terminating case. We must then supply a natural number,
representing how many time steps to wait, and let the conversion fail
if we go beyond that. We define by natural number recursion
\noindent
\begin{minipage}{1.0\linewidth}
  \timeoutEvent
\end{minipage}
Typing of the fourth case uses the requirement $A \, \limit$: The
$\delay$ corresponds to a $\Delay$-tick in the context, which in
general can not be used to advance $e : \Later \event*\,{A}$. It can in
this case since $A \, \limit$ implies $\Later\event*\,{A}\,\limit$.

Even though we can not give a general function $\Diamond\event* A \to
\Diamond A$, we can use the above $\Varid{timeout}$ to give a function
$\Diamond\event*A \to \Diamond (1 + A)$.
\eventDiamond

In general we cannot join two events of type $\Diamond A$ and
$\Diamond B$ to an event of type $\Diamond(A \times B)$, i.e., waiting
for both events to occur and pairing up the result. Doing so for
arbitrary types $A$ and $B$ may lead to space leaks: If the two events
occur at different times, the result of whichever event occurs first
would need to be buffered until the second one occurs. However, if $A$
and $B$ are stable, we can explicitly buffer the early event
occurrence, which allows us to implement the join. The implementation
relies on two auxiliary functions that differ only in which order
their arguments are given. We give only one of them, implemented by
$\Until$-recursion:%
\diamondJoinAux%
where $\odot$ is the infix function%
\progressApp where %
$m_1,m_2 \in \{\Delay,\Later\}$ and $m_1 \leq m_2$.  With the above,
we can now define the join by $\Until$-recursion:%
\diamondJoin

We now give a function constructing elements of $\Diamond$ by
buffering data for a given number of time steps.
%
%
For this we need a type
of \emph{temporal natural numbers} that will serve as a
means to count time:
\begin{align*}
  \Nat_\Delay = \Diamond \Unit
\end{align*}
This type can be thought of as natural numbers, where the successor
operation requires one time step to compute. The zero and successor
can be encoded as:
\begin{align*}
  0_\Delay &= \now_\Diamond \unit \\
  \suc_\Delay\, n &= \wait_\Diamond\,n
\end{align*}

Any temporal natural number can be imported into the future by means
of $\Until$-recursion.

\importTNat

Given a natural number, we can convert it into a temporal natural
number by recursion on natural numbers:
\mkTimer
Intuitively speaking, given a natural number $n$, $\Varid{timer}\,n$
is a timer with $n$ ticks.

The buffer function takes a temporal natural number and requires
$A$ to be stable, for the input to be buffered.
%
%
\buffer

As a final example of working with $\Diamond$ we define a simple server.
First off we define the type of servers as 
\begin{align*}
  \sym{Server} := \Fix\,\alpha.\alpha \times (\sym{Req} \to
  (\Diamond\sym{Resp} \times \alpha))
\end{align*}
where $\sym{Req}$ and $\sym{Resp}$ are the types of requests and
responses, respectively. In each step, a server can receive at most one 
request, which must eventually give a response. In either case, the
server will return a new server in the next time step, with a possibly
updated internal state.

With the above, we can define a simple
server that given a string $s$ and a number $n$, returns $\pair{s}{m}$
after $n$ time steps, where $m$ is the number of requests received. We
set $\sym{Req} := \Nat \times \sym{String}$ and
$\sym{Resp} := \sym{String} \times \Nat$, and consider
$\sym{String}$ to be $\stable$. The server is defined by guarded recursion:

\rServerTwo

The server can be run and initialized with $0$:

\rServerRun

\subsection{Fair streams}
\label{sec:fair:streams}

A stream of type $\Str{A + B}$ will in each step produce either a
value of type $A$ or of $B$. For example, we can implement a scheduler
that interleaves two streams in an alternating fashion, dropping every
other element of either stream:
\altStr
$\Varid{altStr}$ and $\Varid{altStr'}$ are defined by mutual guarded
recursion. The former starts with taking an element from the second
stream whereas the latter starts with taking from the first
stream. This mutual recursive syntax translates to a single guarded
fixed point in the calculus via a standard tupling construction that
constructs both functions simultaneously:%
\altStrMut%
In the above definition, the variable $r$ is of type
\[
  \Box \Later ((\Str A \to \Str B \to \Str {A + B}) \times (\Str A \to \Str B \to \Str{A + B}))
\]
We then use the projections $\pi_i$ lifted to $\Later$ to access the
desired component from $\unbox\,r$:
\begin{align*}
  &\pi^\Later_i : \Later(A_1 \times A_2) \to \Later A_i\\
  &\pi^\Later_i= \lambda x . \delay (\pi_i(\adv\,x))
\end{align*}
Given the above tupling construction, $\Varid{altStr}$ is then defined
as $\rbox(\pi_1\,(\unbox\,\mathit{altStrMut}))$. From now on we will
use the mutual guarded recursion syntax with the understanding that it
can be turned into a single guarded fixed point by tupling.

The following function inhabits the same type as $\Varid{altStr}$, but
it only draws elements from the first stream, dropping the second
stream altogether:

\dropSnd

Following the work by \citet{cave14fair}, we can refine the type
$\Str{A + B}$ to a type $\Fair A B$, whose inhabitants will produce in
each step a value of type $A$ or of type $B$, in a fair
manner:
\[
  \Fair A B = \Fix\, \alpha. A \Until (B \times \Later (B \Until (A \times
  \alpha)))
\]
A term of type $\Fair A B$ may first produce some elements of type
$A$, but must after finitely many steps produce an element of type
$B$. It may continue to produce more elements of type $B$, but must
eventually produce an element of type $A$ and then continue in this
manner indefinitely. This required behaviour prevents us from
implementing $\Varid{dropSnd}$ to produce a fair stream of type
$\Fair A B$. On the other hand, we can re-implement $\Varid{altStr}$
to produce a fair stream as follows:

\altFair

To simplify programming with fair streams we define shortcut
constructors for the type $\Fair A B$. To this end we define the
following variant of the type $\Fair A B$
\[
  \Fair* B A = B \Until (A \times \Later \Fair A B)
\]
We now have that $\Fair A B$ unfolds to
$A \Until (B \times \Later \Fair* B A)$ and thus the two types
$\Fair A B$ and $\Fair* A B$ are isomorphic. Fair streams are
constructed by either staying with the first type $A$ or switching to
the second type $B$.

\noindent
\begin{minipage}{0.45\linewidth}
\staySwitch  
\end{minipage}
\hfill
\begin{minipage}{0.45\linewidth}
\staySwitchB
\end{minipage}

\noindent
From the types above one can immediately see that we can only $\stay$
with the same type finitely often -- indicated by the $\Delay$
modality -- whereas we can $\switch$ arbitrarily -- indicated by the
$\Later$ modality. Using the above shorthands, the $\Varid{altFair}$
function can thus be implemented more concisely as follows:

\altFairSug

The fair stream type $\Fair A B$ can be considered a special case of
the stream type $\Str{A+B}$ with additional liveness constraints. We
can always forget these constraints by converting a fair stream into a
normal stream:

\runFair

The function $\Varid{runFair}$ is defined by guarded recursion with
two nested $\Until$-recursions on the two nested $\Until$-types that
make up the fair stream type. Note that the two recursive calls
$\Varid{run}_1\,d$ and $\Varid{run}_2\,d$ produce a delayed stream of
type $\Delay(\Str{A+B})$. Therefore, we have to use $\Varid{embed}$ to
convert them to type $\Later(\Str{A+B})$.

We conclude with an example that implements a more interesting
interleaving of two streams into a fair stream, namely the fair
scheduler from \citet{cave14fair} that selects a progressively
increasing number of elements from the first stream for each time it
selects an element from the second stream:

\schedSugar

In particular $\unbox\, \Varid{sch}\, 0\, \Varid{as}\, \Varid{bs}$
produces a fair stream of the following form:
\[    
  B\quad A\quad A\quad B\quad A\quad A\quad A\quad B\quad A\quad
  A\quad A\quad A\quad B\quad A\quad A\quad A\quad A\quad A\quad B
  \dots
\]

The fair scheduler is implemented by guarded mutual recursion with a
nested $\Until$-recursion. The natural number is first turned into a
timer, which is then recursed over using the $\Varid{until}$
function. In each recursive step of $\Varid{until}$ -- corresponding
to a tick of the timer -- we select from the first stream. But once
the timer reaches $0_\Delay$, we switch to selecting from the second
stream, then immediately switch to selecting from the first stream
again, increment the counter $m$, and proceed by guarded recursion.

Note that we require $A$ and $B$ to be limit types so that in turn
$\Str A$ and $\Str B$ are limit types. The latter is needed in the
first clause of the $\Varid{until}$ function so that we may apply the
recursive call $\Varid{until}\,n$ of type
$\Delay (\Nat \to \Str A \to \Str B \to \Fair A B)$ to both $as$ and
$bs$, which are of type $\Later \Str A$ and $\Later \Str B$,
respectively.

\section{Operational semantics}
\label{sec:op:sem}

The operational semantics of Lively RaTT is divided into two parts: an
\emph{evaluation semantics} that captures the computational behaviour
at each time instant (\autoref{sec:big-step-operational}), and a
\emph{step semantics} that describes the dynamic behaviour of a Lively
RaTT program over time. We introduce the latter in two stages. At
first we only look at programs without external input
(\autoref{sec:small-step-oper}). Afterwards we extend the semantics to
account for programs that react to external inputs
(\autoref{sec:reactive-small-step}), e.g., terms of type
$\Box(\Str A \to \Str B)$, which continuously read inputs of type $A$
and produce outputs of type $B$. Along the way we give a precise
account of our main technical results, namely productivity,
termination, liveness, and causality properties of the operational
semantics, as well as the absence of implicit space leaks. To prove
the latter, the evaluation semantics is formulated using a store on
which external inputs and delayed computations are placed. At each
reduction step, the step semantics garbage collects all elements
of the store that are more than one step old, thereby avoiding
implicit space leaks by contruction.

\subsection{Evaluation semantics}
\label{sec:big-step-operational}

\begin{figure}
    \textbf{Call-by-value $\lambda$-calculus:}
    \begin{mathpar}
 \inferrule*
  {~}
  {\heval{v}{\sigma}{v}{\sigma}}
  \and
  \inferrule*
  {\heval {t} {\sigma} {v} {\sigma'}\\
    \heval {t'} {\sigma'} {v'} {\sigma''}}
  {\heval {\pair{t}{t'}} {\sigma} {\pair{v}{v'}} {\sigma''}}
  \and
  \inferrule*
  {\heval {t} {\sigma} {\pair{v_1}{v_2}} {\sigma'} \\ i \in \{1,2\}}
  {\heval {\pi_i(t)} {\sigma} {v_i} {\sigma'}}
  \and
  \inferrule*
  {\heval t {\sigma} v {\sigma'} \\ i \in \{1,2\}}
  {\heval {\interm_i(t)} {\sigma} {\interm_i(v)} {\sigma'}}
  \and
  \inferrule*
  {\heval {t} {\sigma} {\interm_i(v)} {\sigma'}\\
    \heval {t_i[v/x]} {\sigma'} {v_i} {\sigma''} \\ i \in \{1,2\}}
  {\heval {\caseterm{t}{x}{t_1}{x}{t_2}} {\sigma} {v_i} {\sigma''}}
  \and
  \inferrule*
  {\heval{t}{\sigma}{\lambda x.s}{\sigma'} \\
    \heval{t'}{\sigma'}{v}{\sigma''}\\
    \heval {s[v/x]}{\sigma''}{v'}{\sigma'''}}
  {\heval{t\,t'}{\sigma}{v'}{\sigma'''}}
  \end{mathpar}
  \textbf{Modalities, $\Until$-types, guarded recursion:}
  \begin{mathpar}
  \inferrule*
  {l = \allocate{\sigma} \\ \sigma \neq \nullstore}
  {\heval{\delay\,t}{\sigma}{l}{(\sigma,l \mapsto t)}}
  \and
  \inferrule*
  {\heval {t}{\eta_N}{l}{\eta_N'} \\
    \heval {\eta_N'(l)}{(\eta_N'\tick\eta_L)} {v}{\sigma'}}
  {\heval {\adv\,t}{(\eta_N\tick\eta_L)}{v}{\sigma'}}
  \and
  \inferrule*
  {\heval{t}{\nullstore}{\rbox\,t'}{\nullstore} \\
    \heval{t'}{\sigma}{v}{\sigma'} \\ \sigma \neq \nullstore}
  {\heval{\unbox\,t}{\sigma}{v}{\sigma'}} \and
    \inferrule*
    {\heval{t}{\sigma}{v}{\sigma'}}
    {\heval{\suc \, t}{\sigma}{\suc \, v}{\sigma'}}
    \and
    \inferrule*
    {\heval{n}{\sigma}{\zero}{\sigma'}
      \\ \heval{s}{\sigma'}{v}{\sigma''}}
    {\heval{\recN{s}{x}{y}{t}{n}}{\sigma}{v}{\sigma''}}
    \and
    \inferrule*
    {\heval{n}{\sigma}{\suc \, v}{\sigma'}
      \\ \heval{\recN{s}{x}{y}{t}{v}}{\sigma'}{v'}{\sigma''}
      \\ \heval{t[v/x,v'/y]}{\sigma''}{w}{\sigma'''}}
    {\heval{\recN{s}{x}{y}{t}{n}}{\sigma}{w}{\sigma'''}}
    \and
    \inferrule*
    {\heval{t}{\sigma}{v}{\sigma'}}
    {\heval{\now\,t}{\sigma}{\now\,v}{\sigma'}}
    \and
    \inferrule*
    {\heval{t_1}{\sigma}{v_1}{\sigma'}
      \\\heval{t_2}{\sigma'}{v_2}{\sigma''}}
    {\heval{\wait\,t_1\,t_2}{\sigma}{\wait\,v_1\,v_2}{\sigma''}}
    \and
    \inferrule*
    {\heval{u}{\sigma}{\now\,v}{\sigma'}
      \\\heval{s[v/x]}{\sigma'}{w}{\sigma''}}
    {\heval{\recU{x}{s}{x}{y}{z}{t}{u}}{\sigma}{w}{\sigma''}}
    \and
    \inferrule*
    {\heval{u}{\sigma}{\wait\,v_1\,v_2}{\sigma'}
      \\\heval{t[v_1/x,v_2/y,l/z]}{(\sigma',l\mapsto \recU{x}{s}{x}{y}{z}{t}{\adv(v_2)})}
      {v'}{\sigma''} \\ l = \allocate{\sigma'}}
    {\heval{\recU{x}{s}{x}{y}{z}{t}{u}}{\sigma}{v'}{\sigma''}} \and
  \inferrule*
  {\heval{t}{\sigma}{v}{\sigma'}}
  {\heval{\into\,t}{\sigma}{\into\,v}{\sigma'}}
  \and
  \inferrule*
  {\heval{t}{\sigma}{\into\,v}{\sigma'}}
  {\heval{\out\,t}{\sigma}{v}{\sigma'}}
  \and
  \inferrule*
  {\heval{t}{\nullstore}{\fix\,x.t'}{\nullstore} \\
    \heval{t'[\rbox(\delay(\unbox(\fix\;x.t')))/x]}{\sigma}{v}{\sigma'} \\
  \sigma \neq \nullstore}
  {\heval{\unbox\,t}{\sigma}{v}{\sigma'}}
\end{mathpar}
  \caption{Evaluation semantics.}
  \label{fig:machine}
\end{figure}

The \emph{evaluation semantics} is presented as a big-step operational
semantics in \autoref{fig:machine} and describes how a configuration
consisting of a term $t$ and a store $\sigma$ evaluates to a 
value $v$ and an updated store $\tau$ in the current time instant, denoted
$\heval t\sigma v\tau$. In the machine, unlike the surface language, terms
can contain locations $l$, to be thought of as locations in the store.
Formally, these range over a given set $\locs$ 
of locations divided into a countably infinite collection of namespaces
each consisting of countably infinitely many locations. 
The grammar below describes which terms of the calculus
are considered values:
\[
  v,w ::= \unit \mid \zero \mid \suc \, v \mid \lambda x.t \mid
  \pair{v}{w} \mid \interm_i\, v \mid \; \rbox\,t\mid \delay\,t\mid \fix \; x.t \mid
  l \mid \into\,v \; \mid \now \, v \mid \wait \, v \, w
\]

A store can be of one of three forms: 
$\sigma ::= \nullstore \mid \eta_L \mid \eta_N \tick \eta_L$. The 
null store $\nullstore$ is used for a special state of the machine
in which it can neither write to the store, nor read from it. In the other
two forms $\eta_L$ (the `later' heap) and $\eta_N$ (the `now' heap) 
are heaps, i.e., pairs of a namespace
and a finite mapping from the namespace to terms. In either of the two
latter cases, the evaluation semantics can update all heaps present
by allocating fresh locations and placing delayed computations in them, 
but it can only read from the $\eta_N$ heap. The notation $\sigma,l \mapsto t$
refers to an extension of the heap furthest to the right in $\sigma$, and 
$\allocate{-}$ is a function returning fresh location in the heap furthest 
to the right. 

%

The fragment of Lively RaTT consisting of the lambda calculus with sums,
products, and natural numbers is given a standard call-by-value
semantics. The non-standard parts of the semantics involve the three
modalities $\Box$, $\Delay$, and $\Later$; the recursion principle for
$\Until$ types; and the fixed point combinator.

The constructors for the three modalities -- $\rbox$ and $\delay$ --
have a call-by-name semantics and produce suspended
computations. Terms of the form $\rbox\,t$ are values consisting of
unevaluated terms $t$, which are only evaluated once they are consumed
by $\unbox$. Terms of the form $\delay\,t$ are computations that may
be executed in the next time step. These computations are suspended
and placed on the heap, since the evaluation semantics only describes
computations at the current time instant. In the next time instant,
these can computation will be executed when $\adv$ is applied to their
location. The safety of the step semantics shows that such delayed
computations will only be executed in the next time step, and do not
need to be stored for future steps. Note that in some special cases, a
term $\delay(t)$ will be executed immediately, for example when
evaluating a closed term of the form $\adv(\delay(t))$. Although such
a term is not well-typed as a closed term by itself, it can occur in
the evaluation of the step semantics.

The operational semantics of the guarded fixed point combinator $\fix$
closely follows the intuition provided by its type: The fixed point
$\fix\,x.t$ is unfolded by delaying it into the future and
substituting it for $x$ in $t$. However, since $\fix\,x.t$ is of type
$\Box A$, it first has to be unboxed before the delay and boxed again
afterwards.

The recursion principle for $\Until$ types is similar to the primitive
recursion principle one would obtain for an inductive type
$\mu \alpha. B + (A \times \alpha)$. The difference to an $\Until$
type, i.e., a type $\mu \alpha. B + (A \times \Delay \alpha)$, is that
each recursive call $\recU*(\dots)$ is shifted one time step into the
future by placing it in the heap. As opposed to $\fix$, however, no additional
unboxing and re-boxing is required.

\subsection{Step semantics}
\label{sec:small-step-oper}

\begin{figure}
  \begin{mathpar}
    \inferrule
    {\heval{t}{\eta\tick}{v :: w}{\eta_N\tick\eta_L}}
    {\state{t}{\eta} \forwards{v} \state{\adv\,w}{\eta_L}}
    \and
    \inferrule
    {\heval{t}{\eta\tick}{\wait\, v\, w}{\eta_N\tick\eta_L}}
    {\state{t}{\eta} \forwardu{v} \state{\adv\,w}{\eta_L}}
    \and
    \inferrule
    {\heval{t}{\eta\tick}{\now\, v}{\eta_N\tick\eta_L}}
    {\state{t}{\eta} \forwardu{v} \state{\mathsf{HALT}}{\eta_L}}    \and
    \inferrule
    {\state{t}{\eta} \forwardu{v} \state{t'}{\eta'}}
    { \state*{t}{\eta}{p} \forwardf{\interm_p\,v}
      \state*{t'}{\eta'}{p} }
    \and
    \inferrule
    {\state{t}{\eta} \forwardu{\pair{v}{w}} \state{\mathsf{HALT}}{\eta'}}
    {\state*{t}{\eta}{1} \forwardf{\interm_{2}\,v} \state*{\adv\,w}{\eta'}{2}}
    \and
    \inferrule
    {\state{t}{\eta} \forwardu{\pair{v}{w}} \state{\mathsf{HALT}}{\eta'}}
    {\state*{t}{\eta}{2} \forwardf{\interm_{1}\,v} \state*{\out(\adv\,w)}{\eta'}{1}}
  \end{mathpar}
  \caption{Step semantics for streams, until types and fair
    streams.}
  \label{fig:stream_machine}
\end{figure}

The \emph{step semantics} given in \autoref{fig:stream_machine}
describes the computation performed by a Lively RaTT program
\emph{over time}. The notation $\state t\eta \forwards{v} \state{t'}{\eta'}$ 
indicates the passage of one time step during which a configuration
consisting of a stream program $t$ and a heap $\eta$ 
transitions to the program $t'$ and heap $\eta'$ emitting the output $v$.  We give
three separate step semantics for stream, until, and fair stream
types, denoted $\forwards{}$, $\forwardu{}$, and $\forwardf{}$,
respectively. In addition, we state our metatheoretic results:
$\forwards{}$ is productive, $\forwardu{}$ is guaranteed to terminate,
and $\forwardf{}$ indeed produces a \emph{fair} stream. But we defer
the proofs of these results to \autoref{sec:metatheory}.

A closed term $t$ of type $\Box \Str A$ is meant to produce an
infinite stream $v_1, v_2, v_3, \dots$ of values of type $A$ in a
step-by-step fashion:
\[
  \state{\unbox\,t}{\emptyset} \forwards{v_1} \state{t_1}{\eta_1}
  \forwards{v_2} \state{t_2}{\eta_2} \forwards{v_3}
  \cdots
\]
Each step
$\state{t_i}{\eta_i} \forwards{v_i} \state{t_{i+1}}{\eta_{i+1}}$
proceeds by first evaluating $\state{t_i}{\eta_i\tick}$ to 
$\state{v_i :: w}{\eta'_i\tick \eta_{i+1}}$, i.e., the head
$v_i \colon A$ and the tail $w \colon \Later \Str A$ of the
steam. Computation may then proceed in the next time step with the
term $t_{i+1} = \adv\, w$ and heap $\eta_{i+1}$ The old heap
$\eta'_{i}$, which consists of $\eta_i$ possibly extended during
evaluation of $t_i$, is garbage collected.

We can show that a term of type $\Box \Str A$ indeed produces such an
infinite sequences of outputs. To state the productivity property of
streams concisely, we restrict ourselves to streams over \emph{value
  types}, which are described by the following grammar:
\[
  U,V ::= \Unit \mid \Nat \mid U \times V \mid U + V
\]
\begin{theorem}[productivity]
  \label{thr:productivity}
  If $\hastype{}{t}{\Box\Str A}$, then there is
  an infinite sequence of reduction steps
    \[
      \state{\unbox\,t}{\emptyset} \forwards{v_1} \state{t_1}{\eta_1}
      \forwards{v_2} \state{t_2}{\eta_2} \forwards{v_3}
      \cdots
    \]
    Moreover, if $A$ is a value type, then $\hastype{}{v_i}{A}$ for all $i \ge 1$.
\end{theorem}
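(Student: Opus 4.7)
The plan is to use the step-indexed Kripke logical relation indicated in the introduction. For each type $A$, each Kripke world $\nu$ (capturing a heap configuration together with the two-step garbage-collection discipline), and each ordinal $\beta \le \omega\cdot 2$, one defines a set $\mathcal{V}\llbracket A\rrbracket_\nu^\beta$ of semantic values and a corresponding set $\mathcal{T}\llbracket A\rrbracket_\nu^\beta$ of terms. At successor ordinals, the clauses for $\Delay A$ and $\Later A$ coincide: a location $l$ inhabits the interpretation exactly when its stored code, evaluated in the next time step, yields a value in $\mathcal{V}\llbracket A\rrbracket$ at $\beta$. At limit ordinals, the two modalities part ways: $\Later A$ is interpreted as an intersection over $\beta' < \beta$ (so that a Löb-style argument validates $\fix$), while $\Delay A$ is interpreted by one more layer of evaluation into $\mathcal{V}\llbracket A\rrbracket$ at $\beta$ itself (so that $A \Until B$ remains genuinely inductive at ordinal $\omega$). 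The range up to $\omega \cdot 2$ provides headroom both for primitive recursion on $\Nat$ and for re-indexing across time steps.

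The core step is a fundamental theorem stating that for every derivation $\hastype{\Gamma}{t}{A}$ and every semantic substitution $\gamma \in \mathcal{C}\llbracket \Gamma\rrbracket$, one has $t\gamma \in \mathcal{T}\llbracket A\rrbracket$. This is proved by structural induction on the typing derivation. Most cases mirror the Simply RaTT development; the new ingredients are the $\recU*$ clause, which terminates by induction on the inductive structure of the semantic value of the scrutinee at ordinal $\omega$, and the $\adv$ rule with $m \le m'$ or $A\,\limit$, which requires the semantic inclusion $\mathcal{V}\llbracket\Delay A\rrbracket \subseteq \mathcal{V}\llbracket\Later A\rrbracket$ together with monotonicity of $\mathcal{V}\llbracket L\rrbracket$ in $\beta$ for limit types $L$ — both provable by induction on types following the grammars in \autoref{fig:stable:and:limit}.

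Productivity then follows by instantiating the fundamental theorem on $\hastype{}{t}{\Box \Str A}$ with the empty world and ordinal $\omega$: $\unbox\,t$ evaluated from the empty store produces a pair $v_1 :: w_1$ on some heap $\eta_1$, with $v_1 \in \mathcal{V}\llbracket A\rrbracket$ at $\omega$ and $w_1$ a location whose stored body, evaluated in the next step, again yields a value in the interpretation of $\Str A$ at $\omega$. Kripke monotonicity under the garbage-collection step ensures that $\state{\adv\,w_1}{\eta_1}$ lies in this interpretation, and iterating gives the infinite reduction sequence. For value types $A$, the interpretation at $\omega$ collapses to the set of syntactic values of type $A$ (no modal or function layer remains to step-index), yielding $\hastype{}{v_i}{A}$. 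The main difficulty will lie in the model design itself: making the asymmetric limit clauses for $\Later$ and $\Delay$ cohere with the heap-based semantics, and verifying that transitioning across a time step — including the garbage collection of older heap fragments — is properly accounted for as a Kripke world extension under which the interpretation is monotone.
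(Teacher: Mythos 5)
Your overall strategy is the paper's: define the step-indexed Kripke logical relation with the asymmetric limit clauses for $\Later$ and $\Delay$, prove the fundamental property by induction on typing derivations, and then unwind a term of semantic type $\Str A$ step by step. Your description of the model (successor clauses for $\Delay$ and $\Later$ coinciding, $\Later$ as an intersection at limits to validate $\fix$, $\Delay$ evaluated at the limit index itself to keep $\Until$ inductive) matches the paper, as does the use of the syntactic/semantic coincidence on value types for the ``moreover'' clause.

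There is, however, a genuine gap in your unwinding step, and it is one that your own (correct) description of the model predicts. You claim that after the first evaluation the tail location $w_1$ has a stored body that ``again yields a value in the interpretation of $\Str A$ at $\omega$.'' But $\omega$ is a limit ordinal, so $\vinterpN{\Later \Str A}{\sigma,(\eta;\ol{\eta}),\alpha,\omega} = \bigcap_{\beta'<\omega}\vinterpN{\Later \Str A}{\sigma,(\eta;\ol{\eta}),\alpha,\beta'}$, and each successor clause only gives $\adv\,w_1 \in \tinterpN{\Str A}{\gc{\sigma}\tick\eta,\ol{\eta},\alpha,\beta'}$ for the \emph{finite} $\beta'$; membership at $\omega$ itself would require $\Str A$ to be a limit type, which holds only when $A$ is limit and is not assumed in the theorem. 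So ``iterating at $\omega$'' does not go through as stated. The paper repairs this by working at every index: it defines $S(A,\beta)$ as the set of configurations $\state{t}{\eta}$ with $t\in\tinterpN{\Str A}{\eta\tick,\ol{\emptyset},\omega,\beta}$, proves the one-step lemma that $S(A,\beta+1)$ configurations step (emitting a semantically well-typed value) into $S(A,\beta)$, notes that the fundamental property puts $\state{\unbox\,t}{\emptyset}$ in $S(A,\beta)$ for \emph{every} $\beta$, and then glues the resulting length-$n$ prefixes into one infinite run using determinism of $\forwards{}$. You need this last gluing-by-determinism move (or, equivalently, the observation that $\bigcap_{\beta<\omega}S(A,\beta)$ is preserved by a step); without it your argument only produces arbitrarily long finite prefixes, not the claimed infinite sequence.
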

In particular, this means that the machine will never get stuck trying
to access a heap location that has been garbage collected. In other
words, the aggressive garbage collection of the heap used in the step
semantics is safe.

Intuitively speaking, value types describe static, time independent
data, and therefore exclude functions and modal types. Since $A$ is a
value type in the above theorem, we can give a concise
characterisation of the output produced by the stream in terms of the
syntactic typing $\hastype{}{v_i}{A}$.

The operational semantics for until types proceeds similarly to stream
types, but has an additional case for when the computation eventually
halts by evaluating to a value of the form $\now\,v$.

We can show that a term of type $\Box(A \Until B)$ produces a sequence
of values of type $A$, but eventually halts with a value of type $B$:
\begin{theorem}[termination]
  \label{thr:termination}
  If $\hastype{}{t}{\Box(A \Until B)}$,
  then there is a finite sequence of reduction steps
  \[
    \state{\unbox\,t}{\emptyset} \forwardu{v_1} \state{t_1}{\eta_1} \forwardu{v_2} \state{t_2}{\eta_2} \forwardu{v_2}
    \dots \forwardu{v_n} \state{\mathsf{HALT}}{\eta_n}
  \]
  Moreover, if  $A$ and $B$ are value types, then
  $\hastype{}{v_i}{A}$ for all $0< i<n$, and
  $\hastype{}{v_n}{B}$.
\end{theorem}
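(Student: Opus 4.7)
The plan is to prove termination via the step-indexed Kripke logical relation outlined in \autoref{sec:metatheory}. The central observation is that the value interpretation of $A \Until B$ at a sufficiently large ordinal $\beta$ (in particular at the limit $\omega$) coincides with an inductively generated set, namely the least fixed point of $X \mapsto \mathcal{V}\llbracket B \rrbracket \cup (\mathcal{V}\llbracket A \rrbracket \times \mathcal{V}\llbracket \Delay X \rrbracket)$. This is made possible precisely by the distinction between $\Delay$ and $\Later$: at a limit ordinal, $\Delay A$ is interpreted using the interpretation of $A$ at the same index, whereas $\Later A$ is interpreted as the intersection over strictly smaller indices. Consequently, unlike $\Later$-guarded recursive types, elements of $\Until$-types can only be built from finitely many $\wait$-constructors.

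The first step is to invoke the fundamental lemma of the logical relation, instantiated at ordinal $\omega$ (or higher): since $\hastype{}{t}{\Box(A \Until B)}$ is a closed typing, $t$ lies in the corresponding value interpretation, and hence $\unbox\,t$ lies in the term interpretation of $A \Until B$ at $\omega$ under the empty heap. By the inductive characterisation described above, the evaluation of $\unbox\,t$ terminates in a value of $A \Until B$ carrying a well-defined finite rank $n$ counting its nested $\wait$-constructors.

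Next, I would argue by induction on this rank to extract the finite reduction sequence. Each $\forwardu{}$ step either reduces the current term to $\now\,v$, yielding the $\mathsf{HALT}$ configuration, or to $\wait\,v\,w$, in which case computation proceeds in the next time instant with $\adv\,w$ on the garbage-collected heap $\eta_i$. The invariant to maintain across steps is that the new configuration $\state{\adv\,w}{\eta_i}$ continues to lie in the term interpretation of $A \Until B$ at $\omega$, but now with strictly smaller $\wait$-rank. The safety of the aggressive inter-step garbage collection follows, as in Simply RaTT, from the store-typing conditions built into the Kripke worlds of the logical relation; the constructor $\wait$ stores its continuation behind $\Delay$, so only the most recent heap is needed. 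Since the rank is a natural number, the sequence must terminate in finitely many steps with $\mathsf{HALT}$.

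Finally, for value types $A$ and $B$ the value interpretations at $\omega$ collapse to the sets of closed syntactic values of the corresponding types, so the typing conclusions $\hastype{}{v_i}{A}$ for $0<i<n$ and $\hastype{}{v_n}{B}$ follow immediately from the semantic membership established along the way. I expect the main obstacle to be designing the interpretation so that the limit-ordinal clause for $\Delay$ genuinely realises $A \Until B$ as an inductive type, and then verifying that this interpretation is preserved, with decreasing rank, across each $\forwardu{}$ step together with the heap allocation and garbage collection bookkeeping -- precisely the issues the step-indexed machinery of \autoref{sec:metatheory} is set up to address.
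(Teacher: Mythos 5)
Your proposal is correct and follows essentially the same route as the paper: the fundamental property places $\unbox\,t$ in the term interpretation of $A \Until B$ at index $\omega$, and termination follows by well-founded descent on a rank bounding the number of $\wait$-unfoldings, which is exactly the ordinal index $\alpha$ already built into the Kripke worlds (the paper packages the per-step descent, including the garbage-collection bookkeeping you flag as the main obstacle, into a single lemma stating that each $\forwardu{}$ step either halts or lands in $U(A,B,\alpha',\beta-1)$ with $\alpha'<\alpha$). Your identification of the limit-ordinal clause for $\Delay$ (versus the intersection used for $\Later$) as the mechanism making $A\Until B$ genuinely inductive, and of the value-type lemma for the syntactic typing of the outputs, matches the paper's argument.
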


The computation performed by a fair stream of type $\Box\Fair A B$
requires a bit of additional bookkeeping: The state of the machine is
represented by a triple  $\state* t\eta p$ consisting of a term $t$, a heap
$\eta$ and a
value $p \in \set{1,2}$ that indicates the mode that the computation
represented by $t$ is in. If $p=1$, then the most recent output was of
type $A$, whereas $p = 2$ indicates that the most recent output was of
type $B$. A fair execution thus means that the computation may not
remain in the same mode indefinitely.
\begin{theorem}[liveness]
  \label{thr:fairness}
  If $\hastype{}{t}{\Box{\Fair A B}}$, then there is an infinite
  sequence of reduction steps
  \[
    \state*{\out\,(\unbox\,t)}{\emptyset}{1} \forwardf{\interm_{p_1} v_1} \state*{t_1}{\eta_1}{p_1}
    \forwardf{\interm_{p_2}v_2} \state*{t_2}{\eta_2}{p_2} \forwardf{\interm_{p_3}v_3}
    \dots
  \]
  such that for each $p\in \set{1,2}$, we have that $p_i = p$ for
  infinitely many $i \ge 1$.  Moreover, if $A$ and $B$ are value
  types, then $\hastype{}{\interm_{p_{i}}v_i}{A+B}$ for all $i \ge 1$.
\end{theorem}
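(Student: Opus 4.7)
The plan is to iterate the termination theorem for $\Until$-types along the two phases that make up a fair stream. Unfolding the definitions, $\Fair A B$ is isomorphic to $A \Until (B \times \Later \Fair* B A)$, and $\Fair* B A$ is isomorphic to $B \Until (A \times \Later \Fair A B)$. So a fair stream alternates between two kinds of phases: a \emph{mode-1 phase} that produces finitely many $A$-values and ends with a pair of a $B$-value and a delayed $\Fair* B A$; and a \emph{mode-2 phase} that dually produces finitely many $B$-values and ends with a pair of an $A$-value and a delayed $\Fair A B$.

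First I would apply Theorem~\ref{thr:termination} to $\rbox(\out(\unbox\,t))$, which has type $\Box(A \Until (B \times \Later \Fair* B A))$. This yields a finite $\forwardu{}$-reduction starting from $\state{\out(\unbox\,t)}{\emptyset}$, emitting $A$-values $v_1,\dots,v_{n-1}$ and finally halting with some $\now\,\pair{v_n}{w}$ where $v_n : B$ and $w : \Later\Fair* B A$. Inspecting \autoref{fig:stream_machine}, each intermediate $\forwardu{v_i}$ lifts to a $\forwardf{\interm_1 v_i}$ step that stays in mode $1$, and the halting step lifts to a $\forwardf{\interm_2 v_n}$ step that transitions to mode $2$ with new term $\adv\,w$. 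Then a symmetric argument, applied to the mode-$2$ phase of type $\Fair* B A$, produces finitely many $B$-outputs before transitioning back to mode~$1$ with a term of type $\Fair A B$, putting us in exactly the same situation as before and letting us iterate the construction forever. Along the way the typing clause $\hastype{}{\interm_{p_i} v_i}{A+B}$ for value types $A, B$ comes directly from the typing clauses of Theorem~\ref{thr:termination} applied in each phase.

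Since each phase contributes at least one step in its own mode and one mode-transition step to the other mode, and there are infinitely many phases, every $p \in \{1,2\}$ occurs infinitely often among the $p_i$, which is the required liveness property.

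The main obstacle is that Theorem~\ref{thr:termination} is stated only for closed terms of the form $\hastype{}{t}{\Box(A \Until B)}$, whereas after the first mode transition the state is $\state*{\adv\,w}{\eta'}{2}$ with a nonempty heap $\eta'$ and a term containing the location $w$, so it is neither closed nor of $\Box$-type. To iterate the argument one therefore needs a semantic strengthening: the invariant to maintain along the reduction is that the current configuration semantically inhabits $\Fair A B$ (resp.\ $\Fair* B A$) in the step-indexed Kripke logical relation of \autoref{sec:metatheory}, from which both termination of the current phase and semantic well-formedness of the continuation at the start of the next phase follow. In practice this means proving the liveness theorem jointly with the underlying semantic interpretation of the two fair-stream types, treating the two mode-transition rules as the steps at which one switches between the two mutually recursive semantic invariants for $\Fair A B$ and $\Fair* B A$. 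Once such an invariant is in place, the alternation described above yields the desired infinite reduction and liveness conclusion directly.
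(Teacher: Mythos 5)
Your proposal is correct and takes essentially the same route as the paper: what you call the ``semantic strengthening'' is exactly the paper's set $F(A,B,\alpha,\beta)$ of configurations defined via the logical-relation interpretation of the two until-types underlying $\Fair A B$ and $\Fair* B A$, and its Lemma~\ref{lem:fairness} is just the single-step form of your per-phase iteration, with the ordinal $\alpha$ strictly decreasing on mode-preserving steps forcing each phase to be finite and the mode switch resetting $\alpha$ to $\omega$. The only minor imprecision is that the ``moreover'' clause of Theorem~\ref{thr:termination} does not literally apply within a phase (the halting type $B \times \Later \Fair* B A$ is not a value type), but the semantic invariant you describe gives the halting value in $\vinterpN{B \times \Later \Fair* B A}{w}$, whence the emitted first component lies in $\vinterpN{B}{w}$ and Lemma~\ref{lem:vinterp_value} yields the required typing.
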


As a special case of the fair stream type we obtain the type
$\sym{Live}(A) = \Fair \Unit A$. Terms of this type do not produce a
result at every time step but they will produce infinitely many
results.

\subsection{Reactive step semantics}
\label{sec:reactive-small-step}

\begin{figure}
  \begin{mathpar}
    \inferrule
    {\heval{t}{\eta,l\mapsto v :: l'\tick l'  \mapsto \unit}{v' ::
        w}{\eta_N\tick\eta_L,l' \mapsto \unit}\\l' = \allocate{\eta\tick}}
    {\stateS{t}{\eta}{l} \forwards*{v}{v'} \stateS{\adv\,w}{\eta_L}{l'}}
    \and
    \inferrule
    {\heval{t}{\eta,l\mapsto v:: l'\tick l' \mapsto \unit}{\wait\, v'\, w}{\eta_N\tick\eta_L,l\mapsto \unit}\\l' = \allocate{\eta\tick}}
    {\stateS{t}{\eta}{l} \forwardu*{v}{v'} \stateS{\adv\,w}{\eta_L}{l'}}
    \and
    \inferrule
    {\heval{t}{\eta,l\mapsto v::l'\tick l'\mapsto \unit}{\now\, v'}{\eta_N\tick\eta_L,l'\mapsto \unit}\\l' = \allocate{\eta\tick}}
    {\stateS{t}{\eta}{l} \forwardu*{v}{v'} \stateS{\mathsf{HALT}}{\eta_L}{l'}}    \and
    \inferrule
    {\stateS{t}{\eta}{l} \forwardu*{v}{v'} \stateS{t'}{\eta'}{l'}}
    { \stateF{t}{\eta}{l}{p} \forwardf*{v}{\interm_p\,v'}
      \stateF{t'}{\eta'}{l'}{p}}
    \and
    \inferrule
    {\stateS{t}{\eta}{l} \forwardu*{v}{\pair{v'}{w}} \stateS{\mathsf{HALT}}{\eta'}{l'}}
    {\stateF{t}{\eta}{l}{1} \forwardf*{v}{\interm_{2}\,v'} \stateF{\adv\,w}{\eta'}{l'}{2}}
    \and
    \inferrule
    {\stateS{t}{\eta}{l} \forwardu*{v}{\pair{v'}{w}} \stateS{\mathsf{HALT}}{\eta'}{l'}}
    {\stateF{t}{\eta}{l}{2} \forwardf*{v}{\interm_{1}\,v'} \stateF{\out(\adv\,w)}{\eta'}{l'}{1}}
  \end{mathpar}
  \caption{Reactive step semantics for streams, until types and fair
    streams.}
  \label{fig:stream_machine_react}
\end{figure}

The step semantics in \autoref{sec:small-step-oper} captures closed
computations without input from an external environment. To capture
reactive computations, we give for each step semantics
$\forward{M}{}$, a corresponding \emph{reactive} step semantics in
Figure~\ref{fig:stream_machine_react}. For instance, the reactive step
semantics for streams may at each step consume some input $v$ and
produce an output $v'$, which we denote by $\forwards*{v}{v'}$.

To supply input to the computation, the machine configurations for the
reactive step semantics are extended by an additional component $l$,
which is the heap location from where the next input value can be
retrieved. For instance, the reactive step semantics for streams takes
a step $\stateS{t}{\eta}{l}\forwards*{v}{v'}\stateS{t'}{\eta'}{l'}$ by
placing the value $v :: l'$ in the heap location location $l$, where
$l'$ is a freshly allocated heap location that serves as a stand-in
for the subsequent input. Assigning $l'$ the dummy value $\unit$ will
prevent the machine from allocating $l'$ during evaluation of $t$. The
reactive step semantics for $\Until$-types and fair streams follow the
same pattern.

Given a term $t$ of type $\Box(\Str A \to \Str B)$, and a sequence
$v_1,v_2,\dots$ of values of type $A$, there is an infinite sequence
of reduction steps
\[
  \stateS{\unbox\,t\,(\adv\,l_0)}{\emptyset}{l_0}  \forwards*{v_1}{v'_1} \stateS{t_1}{\eta_1}{l_1} 
  \forwards*{v_2}{v'_2} \stateS{t_2}{\eta_2}{l_2} \forwards*{v_3}{v'_3}
  \cdots
\]
such that $\hastype{}{v_i}{B}$ for all $i \ge 1$. The first term of
the computation sets up the initial promise of an input by giving the
term $\unbox\,t$ of type $\Str A \to \Str B$ the argument
$\adv\,l_0$. The location $l_0$ is simply the first fresh heap location, i.e.\
$l_0 = \allocate{\emptyset}$; While $\adv\,l_0$ is not a well-typed
term, it has the type $\Str A$ \emph{semantically}, in the sense that
$\adv\,l_0$ is a term in the logical relation $\dinterp{\Str A}$ that
we construct in \autoref{sec:metatheory}.

\begin{theorem}[causality]
  \label{thr:causality}
  Let $v_1,v_2,\dots$ be an infinite sequence of values with
  $\hastype{}{v_i}{A}$ for all $i \ge 1$.
  \begin{enumerate}[(i)]
  \item If $\hastype{}{t}{\Box(\Str A \to \Str B)}$, then there is an infinite
    sequence of reduction steps
    \[
    \stateS{\unbox\,t\,(\adv\,l_0)}{\emptyset}{l_0}  \forwards*{v_1}{v'_1} \stateS{t_1}{\eta_1}{l_1} 
    \forwards*{v_2}{v'_2} \stateS{t_2}{\eta_2}{l_2} \forwards*{v_3}{v'_3}
    \cdots
  \]
    Moreover, if $B$ is a value type, then $\hastype{}{v'_i}{B}$ for
    all $i \ge 1$.
    \label{item:causalityStr}
  \item If $\hastype{}{t}{\Box(\Str A \to B \Until C)}$, then there is a finite
    sequence of reduction steps
    \[
      \stateS{\unbox\,t\,(\adv\,l_0)}{\emptyset}{l_0}
      \forwardu*{v_1}{v'_1} \stateS{t_1}{\eta_1}{l_1} 
      \forwardu*{v_2}{v'_2} \stateS{t_2}{\eta_2}{l_2} \forwardu*{v_3}{v'_3}
      \dots \forwardu*{v_n}{v'_n} \stateS{\mathsf{HALT}}{\eta_n}{l_{n}}
      \]
    Moreover, if $B$ and $C$ are value types, then
    $\hastype{}{v'_i}{B}$ for all $0< i<n$, and
    $\hastype{}{v'_n}{C}$.
    \label{item:causalityUntil}
  \item If $\hastype{}{t}{\Box(\Str A \to \Fair B C)}$, then there is an infinite
    sequence of reduction steps
    \[
      \stateF{\out\,(\unbox\,t\,(\adv\,l_0))}{\emptyset}{l_0}{1}
      \forwardf*{v_1}{\interm_{p_1} v'_1} \stateF{t_1}{\eta_1}{l_1}{p_1}
      \forwardf*{v_2}{\interm_{p_2}v'_2} \stateF{t_2}{\eta_2}{l_2}{p_2}
      \forwardf*{v_3}{\interm_{p_3}v'_3} \dots
    \]
    such that for each $p\in \set{1,2}$, we have that $p_i = p$ for
    infinitely many $i \ge 1$. Moreover, if $B$ and $C$ are value
    types, then $\hastype{}{\interm_{p_{i}}v'_i}{B+C}$ for all
    $i \ge 1$.
    \label{item:causalityFair}
  \end{enumerate}
\end{theorem}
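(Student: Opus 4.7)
The plan is to reduce each part of the theorem to its closed-term counterpart (Theorems~\ref{thr:productivity}, \ref{thr:termination}, \ref{thr:fairness}) by exhibiting a \emph{semantic} stream of inputs that can be fed to $\unbox\,t$. Concretely, I will exploit the step-indexed Kripke logical relation $\dinterp{-}$ built in \autoref{sec:metatheory} and show that, although $\adv\,l_0$ is not a syntactically well-typed closed term, it belongs to $\dinterp{\Str A}$ whenever the heap provides a chain of values at successive locations. Combined with the fundamental lemma (semantic soundness of typing), this yields $\unbox\,t\,(\adv\,l_0) \in \dinterp{\Str B}$, resp.\ $\dinterp{B \Until C}$, $\dinterp{\Fair B C}$, from which the reduction sequences in (i)--(iii) can be extracted in exactly the same way as in the closed case.

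The central construction is that of a \emph{semantic input oracle}. Given the input sequence $v_1, v_2, \dots$ and a chain $l_0, l_1, l_2, \dots$ of pairwise distinct locations in appropriately disjoint namespaces, I define an infinite family of heap configurations in which location $l_i$ is reserved for the $(i{+}1)$-th input. I then prove, by guarded recursion on the interpretation of $\Str A$, that with such a heap in place the term $\adv\,l_i$ lies in $\dinterp{\Str A}$ at every step-index. The value type hypothesis on $A$ in (i) (and on $B, C$ in (ii), (iii)) is used precisely to show that the concrete values $v_i$ populate the interpretation $\dinterp{A}$ at every step. For part (ii) the oracle is only needed for finitely many steps, and for part (iii) the alternation between modes $1$ and $2$ is handled by the same logical-relation argument that proves Theorem~\ref{thr:fairness}.

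The main obstacle is maintaining the oracle invariant across reactive steps. Each reactive step
\[
  \stateS{t_i}{\eta_i}{l_i} \forwards*{v_{i+1}}{v'_{i+1}} \stateS{t_{i+1}}{\eta_{i+1}}{l_{i+1}}
\]
proceeds by placing $v_{i+1} :: l_{i+1}$ at $l_i$ in the heap, evaluating one step, and then garbage collecting the ``now'' portion; the location $l_{i+1}$ is left as a dummy $\unit$ in the surviving heap, and the subsequent step will overwrite it with $v_{i+2} :: l_{i+2}$. To make the induction go through, I would strengthen the semantic productivity/termination/liveness statements from \autoref{sec:metatheory} to a parametric version that assumes a ``pending input promise'' at the distinguished location $l_i$: the resulting invariant asserts that $\adv\,l_i$ is a member of $\dinterp{\Str A}$ modulo the choice of future inputs, and that this property is preserved after one step of the reactive machine. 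Verifying preservation requires analysing the heap extension and garbage collection performed by $\forwards*{}{}$, $\forwardu*{}{}$ and $\forwardf*{}{}$ and checking that the freshness condition $l_{i+1} = \allocate{\eta_i\tick}$ respects the namespace discipline used by the logical relation.

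Once the semantic results are in place, the syntactic typing of the outputs in the value-type case follows by the standard property that, on value types, $v \in \dinterp{U}$ implies $\hastype{}{v}{U}$, which is proved by a straightforward induction over the grammar of $U$. Assembling the three parts, (i) reuses the productivity extraction of \autoref{thr:productivity}, (ii) reuses the termination extraction of \autoref{thr:termination}, and (iii) combines the termination extraction with the $\Fair{}{}$-mode bookkeeping from \autoref{thr:fairness} in order to conclude that each of the two modes is visited infinitely often.
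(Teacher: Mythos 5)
Your proposal follows essentially the same route as the paper: your ``semantic input oracle'' is the paper's heap-sequence construction $H(A,i)$ with $\eta^i_v = l_i \mapsto v :: l_{i+1}$, your claim that $\adv\,l_i$ semantically inhabits $\Str A$ is Lemma~\ref{lem:placeholderVinterp}, and your ``parametric'' strengthening of the productivity/termination/liveness lemmas is exactly the set $S^R$ together with Lemma~\ref{lem:causality}. One small correction: no value-type hypothesis on the input type $A$ is needed --- membership of the $v_i$ in $\vinterpN{A}{w}$ follows from $\hastype{}{v_i}{A}$ via the fundamental property applied to closed values, and the value-type assumption is only used on the output types to transfer semantic typing of the outputs back to syntactic typing via Lemma~\ref{lem:vinterp_value}.
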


Since the operational semantics is deterministic, in each step
$\stateS{t_i}{\eta_i}{l_i} \forwards*{v_{i+1}}{v'_{i+1}}
\stateS{t_{i+1}}{\eta_{i+1}}{l_{i+1}}$ the resulting output $v'_{i+1}$
and new state of the computation
$\stateS{t_{i+1}}{\eta_{i+1}}{l_{i+1}}$ are uniquely determined by the
previous state $\stateS{t_i}{\eta_i}{l_i}$ and the input
$v_{i+1}$. Thus, $v'_{i+1}$ and
$\stateS{t_{i+1}}{\eta_{i+1}}{l_{i+1}}$ are independent of future
inputs $v_j$ with $j > i+1$. The same is true for the corresponding
reactive step semantics of $\Until$-types and fair streams.

\section{Metatheory}
\label{sec:metatheory}

In this section we show the soundness of the type system, which
typically means that a well-typed term will never get stuck. However,
we show a stronger, \emph{semantic} type soundness property that will
allow us to prove the operational properties detailed in
\autoref{sec:op:sem}.  To this end, we devise a Kripke logical
relation. Essentially, such a logical relation is a family
$\llbracket A \rrbracket(w)$ of sets of closed terms that satisfy the
desired soundness property. This family of sets is indexed by $w$
drawn from a suitable sets of ``worlds'' and defined inductively on
the structure of the type $A$ and world $w$. The proof of soundness is
then reduced to a proof that $\hastype{}{t}{A}$ implies
$t \in \llbracket A \rrbracket(w)$.

The full proofs of our results can be found in the accompanying
technical appendix.

\subsection{Worlds}
To a first approximation, the worlds in our logical relation contain
two ordinals $\alpha \leq \omega$ and $\beta < \omega \cdot 2$ and a store
$\sigma$. The two ordinals are used to
define the logical relation for recursive types, the first for
(temporal) inductive types and the latter for step-indexed guarded
recursive types. For guarded recursive types, we achieve this by
defining $\llbracket \Later A \rrbracket(\sigma,\alpha,\beta)$ in terms of
$\llbracket A \rrbracket(\sigma,\alpha,\beta')$ for strictly smaller
$\beta'$.  Since unfolding $\Fix\,\alpha.A$ introduces a $\Later$, 
the step index decreases for the recursive call. For
the inductive types, we define
$\llbracket A \Until B \rrbracket(\sigma,\alpha,\beta)$ in terms of
$\llbracket \Delay (A \Until B) \rrbracket(\sigma,\alpha',\beta)$ where
$\alpha' < \alpha$. Intuitively, $\alpha$ gives an upper limit to the
number of unfoldings of the inductive type used in terms.

While this setup is sufficient for proving productivity, safety of
garbage collection, termination, and liveness properties, it is not
enough to capture causality. To characterise causality, the logical
relation also needs to account for the possible inputs a given term
may receive. We do this by further indexing the logical relation by a
sequence of future inputs. To this end, we assume an infinite sequence
of heaps $\eta_1;\eta_2;\ldots$, denoted $\ol{\eta}$, that describes
the input that is received at each point in the future. The namespaces
of all heaps in $\ol\eta$ and $\sigma$ are assumed pairwise disjoint.
The worlds in our logical relation are thus of the from
$(\sigma,\ol{\eta},\alpha,\beta)$.

As is standard for Kripke logical relations, our relation will be
closed under moving to a bigger world. Worlds are ordered as follows:
$(\sigma,\ol{\eta},\alpha,\beta) \leq (\sigma',\ol{\eta}',\alpha',\beta')$ if
$\sigma \tickle \sigma'$, $\ol\eta\heaple \ol\eta'$, $\alpha =  \alpha'$ and
$\beta' \leq\beta$. Here $\ol\eta\heaple \ol\eta'$ refers to the pointwise 
ordering on partial maps (assuming identity of namespaces) and
$\sigma \tickle \sigma'$ is the extension of this with the rule 
$\eta_L \tickle \eta_N\tick \eta_L$ to a preorder. Note that the null store
$\nullstore$ is only related to itself under this ordering.

\subsection{Support and renamings}
To prove closure of the Kripke semantics under store extensions, 
a similar property must be proved for the machine, i.e, if $\state t\sigma$
evaluates to a value and if $\sigma\tickle \sigma'$ then $\state t{\sigma'}$
evaluates to the same value. However, this statement is not entirely
true, because the machine, when run in an extended state, may allocate 
different locations on the heap and the resulting values may differ correspondingly.
To prove that this is the only way that the values can differ, 
we introduce notions of a renaming and support. 

A renaming is a map 
$\phi: \locs \to \locs$ respecting name spaces. Such a map acts on terms
by substitution, and this extends to heaps and stores by 
$\phi(\eta, l \mapsto t) = \phi(\eta), \phi(l) \mapsto \phi(t)$. We write
$\phi : (t, \sigma, \ol\eta) \to (t', \sigma', \ol\eta)$ if $\phi(t) = t', 
\phi(\sigma) \tickle \sigma', \phi(\ol\eta) \heaple \ol\eta'$.
%
%
Given a term $t$ and a pair of a store and a heap sequence
$(\sigma,\ol{\eta})$, we say that $t$ is \emph{supported} by
$(\sigma,\ol{\eta})$, written $t \support (\sigma,\ol{\eta})$, if
whenever a location in $t$ occurs in the namespaces of $\sigma$ or
$\ol{\eta}$, it must be in the domain of $\sigma$ or $\ol{\eta}$,
respectively. Given $(\sigma,\ol{\eta})$, we write
$(\sigma,\ol{\eta}) \; \supported$ if all values in the codomains of
$\sigma$ and $\ol{\eta}$ are supported by $(\sigma,\ol{\eta})$. 
We restrict attention to Kripke worlds where $(\sigma, \ol\eta)$ is
supported. 

\subsection{Logical relation}
Our logical relation consists of two parts: A \emph{value relation}
$\vinterpN{A}{w}$ that contains all values that semantically inhabit
type $A$ at the world $w$, and a corresponding \emph{term relation}
$\mathcal{T}\llbracket A \rrbracket(w)$ containing terms that evaluate
to element in $\vinterpN{A}{w}$. The two relations
are defined by mutual induction in \autoref{fig:logical-relation}.
More precisely, the two relations are defined by well-founded
recursion by the lexicographic ordering on the tuple
$(\beta,\tsize{A},\alpha,e)$, where $\tsize{A}$ is the size of $A$ defined
below, and $e = 1$ for the term relation and $e = 0$ for the value
relation.
\begin{align*}
  \tsize{\alpha} &= \tsize{\Later A} = \tsize{1} = \tsize{\Nat} =  1
  \\
  \tsize{A \times B} &= \tsize{A + B} = \tsize{A \Until B} = \tsize{A \to B} = 1 +
                       \tsize{A} + \tsize{B}
  \\
  \tsize{\Box A} &= \tsize{\Delay A} =  \tsize{\Fix\, \alpha. A} =  1 +
                   \tsize{A}
\end{align*}
Note that since $\tsize{\Later (\Fix\;\alpha.A)} = \tsize \alpha$, the
size of $A[\Later\Fix\,\alpha.A/\alpha]$ is strictly smaller than that
of $\Fix\,\alpha.A$, which justifies the well-foundedness of recursive
types. Note also that
$\vinterpN{A \Until B}{\sigma,\ol{\eta},\alpha,\beta}$ is defined in terms
of $\vinterpN{\Delay(A \Until B)}{\sigma,\ol{\eta},\alpha',\beta}$ for
$\alpha' < \alpha$. To obtain well-foundedness, we would need
$\tsize{\Delay(A\Until B)} \leq \tsize {A\Until B}$, which is not
true. But this problem can be avoided by ``inlining'' the definition
of $\vinterpN{\Delay(A \Until B)}{\sigma,\ol{\eta},\alpha',\beta}$, which
is defined in terms of
$\tinterpN{A\Until B}{\sigma',\ol{\eta}',\alpha',\beta}$ where $\sigma'$
and $\ol{\eta}'$ are appropriately modified. For the sake of readability,
we will keep the definitions as given.


\begin{figure}
  \begin{align*}
    \vinterpN{1}{w}
    =\ &\set{\unit},
    \\
    \vinterpN{\Nat}{w}
    =\ &\setcom{\suc^n \, \zero}{ n \in \nats},
    \\
    \vinterpN{A \times B}{w}
    =\ &\setcom{(v_1,v_2)}{v_1 \in \vinterpN{A}{w}
      \land v_2 \in \vinterpN{B}{w}},
    \\
    \vinterpN{A + B}{w}
    =\ &\setcom{\interm_1 \, v}{v \in
      \vinterpN{A}{w}} \cup
      \setcom{\interm_2 \, v}{v \in
      \vinterpN{B}{w}}
    \\
    \vinterpN{A \to B}{\sigma,\ol{\eta},\alpha,\beta}
    =\ &\{ \lambda x.t \mid
      t \support (\sigma,\ol{\eta}) \land
      \forall\beta'\le\beta.
      \forall \psi : (t,\sigma,\ol{\eta}) \to (t',\sigma',\ol{\eta}').\\
    &\hspace{45pt}\forall v \in \vinterpN{A}{\gc{\sigma'},\ol{\eta}',\alpha,\beta'}.
      t'[v/x] \in \tinterpN{B}{\gc{\sigma'},\ol{\eta}',\alpha,\beta'}\}
    \\
    \vinterpN{\Box A}{\sigma,\ol{\eta},\alpha,\beta}
    =\ &\setcom{t}{
      \forall \ol{\eta}'.
      \unbox\,t \in \tinterpN{A}{\emptyset,\ol{\eta}',\alpha,\beta} \land t \, \labelfree}
    \\
    \vinterpN{\Delay A}{\sigma,(\eta;\ol{\eta}),\alpha,\beta}
    =\
      &\begin{cases}
        \dom{\gc{\sigma}} & \beta = 0 \land \sigma \neq \nullstore
        \\
        \setcom{l}{\adv \, l \in
          \tinterpN{A}{\gc{\sigma}\tick\eta,\ol{\eta},\alpha,\beta'}} &
        \beta = \beta' + 1 \land \sigma \neq \nullstore
        \\
        \setcom{l}{\adv \, l \in
          \tinterpN{A}{\gc{\sigma}\tick\eta,\ol{\eta},\alpha,\beta}} &
        \beta \text{ limit ordinal} \land \sigma \neq \nullstore
      \end{cases} 
    \\
    \vinterpN{\Later A}{\sigma,(\eta;\ol{\eta}),\alpha,\beta}
    =\ 
      &\begin{cases}
        \dom{\gc{\sigma}} & \beta = 0 \land \sigma \neq \nullstore
        \\
        \setcom{l}{\adv \, l \in
          \tinterpN{A}{\gc{\sigma}\tick\eta,\ol{\eta},\alpha,\beta'}} &
        \beta = \beta' + 1 \land \sigma \neq \nullstore
        \\
        \bigcap_{\beta' < \beta} \vinterpN{\Later
          A}{\sigma,(\eta;\ol{\eta}),\alpha,\beta'} & \beta \text{
          limit ordinal} \land \sigma \neq \nullstore
      \end{cases}
    \\
    \vinterpN{\Fix\,\alpha.A}{w}
    =\ &\setcom{ \into(v)}{v \in
      \vinterpN{A[\Later(\Fix\,\alpha.A)/\alpha]}{w}}
    \\
    \vinterpN{A \Until B}{\sigma,\ol{\eta},\alpha,\beta}
    =\ &\setcom{\now \,v}{v \in
      \vinterpN{B}{\sigma,\ol{\eta},\omega,\beta}} \cup
    \\
     &\setcom{\wait \,v\,w}{v \in
      \vinterpN{A}{\sigma,\ol{\eta},\omega,\beta} \land
      \exists \alpha' < \alpha. w \in \vinterpN{\Delay(A \Until B)}{\sigma,\ol{\eta},\alpha',\beta}}
    \\
    \tinterpN{A}{\sigma,\ol{\eta},\alpha,\beta}
    =\ &\setcom{t}{t \support (\sigma,\ol{\eta}) \land \exists \sigma', v
      . \heval{t}{\sigma}{v}{\sigma'}
      \land v \in \vinterpN{A}{\sigma',\ol{\eta},\alpha,\beta}}
  \end{align*}
  ~\\[-0.5em]
  \textsc{Garbage collection:}\qquad $\gc{\nullstore} = \nullstore\qquad\gc{\eta_L} =
  \eta_L\qquad \gc{\eta_N\tick\eta_L} = \eta_L$
  \caption{Logical Relation.}
  \label{fig:logical-relation}
\end{figure}
In the definition for $A \to B$ we explicitly add the closure
properties for support, renaming and worlds. Further, we restrict the
lambda abstractions to garbage collected stores. This reflects the
typing rule for lambda abstractions, which requires $\Gamma$ to be
tick-free. 

The definition of $\Box A$ captures the notion of stability. All terms
must be free of locations and able to evaluate safely with any input
sequence and hence, in any future.

The value relations for $\Later A$ and $\Delay A$ 
differ only in the case where $\beta$ is a limit ordinal. In the
successor case, they encapsulate the soundness of garbage
collection: The set
$\vinterpN{m\; a}{\sigma,(\eta;\ol{\eta}),\alpha,\beta+1}, m \in
\{\Delay,\Later\}$ contains all heap locations that can be read and
executed safely in the next timestep. In particular, such terms must 
evaluate in the garbage collected store extended with the
next set of external inputs. If $\beta$ is a limit ordinal, $\Delay A$
has the same interpretation except that the index $\beta$ is fixed. This
is to ensure that inductive types have the correct global behaviour. 
On the other hand, $\Later A$ is defined to be the intersection
of the interpretation at all smaller ($\beta$)-indices.
This definition is needed for the
interpretation of fixed points.

In the definition of $A \Until B$ we see the use of the $\alpha$-index to give
an upper bound on the number of unfoldings used in the elements of the 
logical relation.
In
particular, if $\alpha = 0$, the relation contains only values of the
form $\now\,v$ whereas if $\alpha > 0$, the relation also contain values
of the form $\wait\,u\,w$ defined in terms of values from
$\vinterpN{\Delay(A\Until B)}{\sigma,\ol{\eta},\alpha',\beta}$ where $\alpha' < \alpha$.

Our value and term interpretation is closed w.r.t the
Kripke structure on $\sigma,\ol{\eta}$ and $\beta$ and the value relation is upwards closed
w.r.t $\alpha$ for $\Until$-types. 
\begin{lemma}[Kripke Properties]
  \label{lem:kripke}
  Given $A,B$ and worlds $(\sigma,\ol{\eta},\alpha,\beta), (\sigma',\ol{\eta}',\alpha',\beta')$ s.t 
  $\sigma \tickle \sigma', \ol{\eta} \heaple \ol{\eta}', \alpha \leq\alpha'$ and $\beta' \leq \beta$ we have
  \begin{enumerate}
  \item $\vinterpN{A}{\sigma,\ol{\eta},\alpha,\beta} \subseteq \vinterpN{A}{\sigma',\ol{\eta}',\alpha,\beta'}$
  \item $\tinterpN{A}{\sigma,\ol{\eta},\alpha,\beta} \subseteq \tinterpN{A}{\sigma',\ol{\eta}',\alpha,\beta'}$
  \item $\vinterpN{A \Until B}{\sigma,\ol{\eta},\alpha,\beta} \subseteq \vinterpN{A \Until B}{\sigma,\ol{\eta},\alpha',\beta}$
  \end{enumerate}
\end{lemma}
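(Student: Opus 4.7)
My plan is to prove all three parts simultaneously by well-founded induction on the lexicographic order $(\beta, \tsize{A}, \alpha, e)$ that is used to define the logical relation in \autoref{fig:logical-relation}. Part~(3) is essentially a direct consequence of the definition: $\now\,v$ does not depend on $\alpha$ at all, and for $\wait\,v\,w$ any witness $\alpha'' < \alpha$ also satisfies $\alpha'' < \alpha'$, so the same $w$ works at the larger $\alpha'$. For parts (1) and (2), I would proceed by case analysis on $A$, assuming the statement at strictly smaller tuples.

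The routine cases are $\Unit, \Nat$ (no dependence on the world), $A \times B$ and $A + B$ (direct appeal to the IH on strictly smaller types), $\Box A$ (the interpretation has no dependence on $(\sigma,\ol\eta)$, and closure in $\beta$ follows from IH on $\tinterpN{A}{\emptyset,\ol\eta',\alpha,\beta}$ at strictly smaller type), $\Fix\,\alpha.A$ (the unfolded type $A[\Later\Fix\,\alpha.A/\alpha]$ has strictly smaller size, so IH applies), and $A \to B$ (the universal quantifier over renamings $\psi$ and indices $\beta'' \le \beta$ already subsumes moves to any larger world, and the support obligation is preserved by store/heap extension). For $A \Until B$ I would use part~(3) freely and, in the $\wait$-clause, invoke IH at strictly smaller $\alpha'$ after inlining the $\Delay(A \Until B)$ definition as suggested in the paragraph following \autoref{fig:logical-relation}.

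The core technical cases are $\Delay A$ and $\Later A$. At successor indices one applies IH at a strictly smaller $\beta_0$ after observing that $\sigma \tickle \sigma'$ implies $\gc{\sigma} \heaple \gc{\sigma'}$, so that $\gc{\sigma}\tick\eta \tickle \gc{\sigma'}\tick\eta'$ under the componentwise heap order. At a limit ordinal $\beta$, the $\Later$-clause is an intersection over all $\beta'' < \beta$, so closure is immediate once one applies IH (at strictly smaller $\beta$) to each component; the $\Delay$-clause is structurally identical to the successor case because its definition at a limit does not decrease $\beta$. The subtle move from $\beta > 0$ to $\beta' = 0$ is handled by the following observation: if $l$ lies in the $\beta > 0$ interpretation of $\Later A$ (or $\Delay A$), then $\adv\,l$ must successfully evaluate in the store $\gc\sigma \tick \eta$, which by the evaluation rule for $\adv$ forces $l \in \dom{\gc\sigma}$; combined with $\gc\sigma \heaple \gc{\sigma'}$ this gives $l \in \dom{\gc{\sigma'}}$ as required.

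Part (2), closure of $\tinterpN{A}{-}$, reduces to part (1) via an auxiliary monotonicity lemma for the big-step evaluation relation, stating that if $\heval t\sigma v\tau$ and $\sigma \tickle \sigma'$ then there exist $v', \tau'$ and a renaming $\phi$ with $\heval t{\sigma'}{v'}{\tau'}$ and $\phi : (v,\tau) \to (v',\tau')$; combined with the renaming and Kripke closure of $\vinterpN{A}{-}$ built into the definition (visible most clearly in the arrow case), this yields the term-relation inclusion. The main obstacle I anticipate is bookkeeping around the interaction of $\tickle$, $\gc{-}$, and the disjoint namespaces for $\sigma$ and the heaps in $\ol\eta$, together with checking that the simultaneous induction really is well-founded through the $\Until$-case; the inlining of $\vinterpN{\Delay(A \Until B)}{-}$ is crucial here, since after inlining the recursive appeal falls on $\vinterpN{A \Until B}{-}$ itself at strictly smaller $\alpha$, which restores the lexicographic decrease.
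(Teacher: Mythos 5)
Your proposal is correct and takes essentially the approach the paper itself sets up: a simultaneous well-founded induction on $(\beta,\tsize{A},\alpha,e)$, with part (3) immediate from the existential over $\alpha'<\alpha$, the $\Until$-case handled by inlining $\vinterpN{\Delay(A\Until B)}{-}$ so the recursive appeal lands on $\vinterpN{A\Until B}{-}$ at smaller $\alpha$, and part (2) reduced to part (1) via the evaluation-under-store-extension lemma phrased with the renamings and support of \autoref{sec:metatheory}. The one point to tighten is that closure of the value relation under renamings is built into the definition only for $A\to B$; for the other type formers it must be stated as a separate (or fourth simultaneous) claim and discharged by the same induction.
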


As stated above we treat $\Delay$ as a sub-modality of $\Later$ and
this is expressed semantically in the following lemma:
\begin{lemma}[Sub-modality]
  \label{lem:submodal}
  Given $A$ and world $(\sigma,\ol{\eta},\alpha,\beta)$ then
  \begin{align*}
    \vinterpN{\Delay A}{\sigma,\ol{\eta},\alpha,\beta} \subseteq \vinterpN{\Later A}{\sigma,\ol{\eta},\alpha,\beta}
  \end{align*}
\end{lemma}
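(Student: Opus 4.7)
The plan is to prove the inclusion by transfinite induction on $\beta$, with a straightforward argument in the base case and the successor case, and a more careful use of the induction hypothesis in the limit case. I would first dispose of the degenerate case $\sigma = \nullstore$, where both sides are empty, and then assume $\sigma \neq \nullstore$ and write $\ol{\eta}$ as $(\eta;\ol{\eta}')$ so the clauses of the definitions in \autoref{fig:logical-relation} apply.

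For $\beta = 0$ and for $\beta = \beta_0 + 1$ a successor, the definitions of $\vinterpN{\Delay A}{\sigma,(\eta;\ol{\eta}'),\alpha,\beta}$ and $\vinterpN{\Later A}{\sigma,(\eta;\ol{\eta}'),\alpha,\beta}$ are literally identical, so the inclusion is immediate. The only real work is the limit case.

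Suppose $\beta$ is a limit ordinal and take $l \in \vinterpN{\Delay A}{\sigma,(\eta;\ol{\eta}'),\alpha,\beta}$, so that $\adv\,l \in \tinterpN{A}{\gc{\sigma}\tick\eta,\ol{\eta}',\alpha,\beta}$. We must show that $l$ belongs to $\bigcap_{\beta' < \beta}\vinterpN{\Later A}{\sigma,(\eta;\ol{\eta}'),\alpha,\beta'}$, so I would fix an arbitrary $\beta' < \beta$ and split on its shape. If $\beta' = 0$, I need $l \in \dom{\gc{\sigma}}$; this follows because $\adv\,l$ must be supported by $\gc{\sigma}\tick\eta$ and in fact the evaluation rule for $\adv$ looks $l$ up in the now-heap $\gc{\sigma}$, which forces $l \in \dom{\gc{\sigma}}$. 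If $\beta' = \beta'' + 1$, then I need $\adv\,l \in \tinterpN{A}{\gc{\sigma}\tick\eta,\ol{\eta}',\alpha,\beta''}$, which follows from the hypothesis at index $\beta$ by the Kripke closure of the term relation in the $\beta$-component (Lemma \ref{lem:kripke}(2)), since $\beta'' < \beta$. If $\beta'$ is itself a limit ordinal, I appeal to the induction hypothesis at $\beta'$: it suffices to show $l \in \vinterpN{\Delay A}{\sigma,(\eta;\ol{\eta}'),\alpha,\beta'}$, and by the limit clause of $\Delay$ this amounts to $\adv\,l \in \tinterpN{A}{\gc{\sigma}\tick\eta,\ol{\eta}',\alpha,\beta'}$, which again follows from the Kripke property of the term relation because $\beta' < \beta$.

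The main obstacle is the limit case, specifically the sub-case where $\beta'$ is itself a limit ordinal; this is where the mismatch between the two definitions really shows up, and it is handled by nesting the induction hypothesis with Kripke monotonicity of the term relation. All the other work is either trivial bookkeeping or unwinding definitions, so the argument should go through cleanly.
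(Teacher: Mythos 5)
Your proposal is correct and follows exactly the paper's approach: the paper's entire proof is ``Follows by transfinite induction on $\beta$,'' and your argument is a faithful (and more detailed) elaboration of that induction, with the only nontrivial work in the limit case handled correctly via \autoref{lem:kripke} and the nested induction hypothesis.
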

\begin{proof}
  Follows by transfinite induction on $\beta$.
\end{proof}

The next lemma justifies the terminology `limit types', by showing
that the interpretation of these at limit ordinals is the intersection
of the interpretations at the ordinals below. In category theoretic
terms, the intersection is a limit, and such a type is a
sheaf~\cite{maclane2012sheaves}.

\begin{lemma}[Limit Types]
  \label{lem:limit_types}
  If $A\,\limit$ and $\beta$ is a limit ordinal, then 
  \begin{align*}
   \bigcap_{\beta' < \beta} \vinterpN{A}{\sigma,\ol{\eta},\alpha,\beta'} & = \vinterpN{A}{\sigma,\ol{\eta},\alpha,\beta}  &
   \bigcap_{\beta' < \beta} \tinterpN{A}{\sigma,\ol{\eta},\alpha,\beta'} & = \tinterpN{A}{\sigma,\ol{\eta},\alpha,\beta}
  \end{align*}
\end{lemma}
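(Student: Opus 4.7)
The plan is to prove both equalities simultaneously by induction on the size $|L|$ of the limit type, deducing the term-level identity from the value-level one at the end of each case. The inclusion $\vinterpN{L}{\sigma,\ol\eta,\alpha,\beta}\subseteq\bigcap_{\beta'<\beta}\vinterpN{L}{\sigma,\ol\eta,\alpha,\beta'}$, and similarly for $\tinterpN{\cdot}{\cdot}$, is immediate from Kripke monotonicity (\autoref{lem:kripke}), so all the work lies in the reverse inclusion.

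For the reverse inclusion I would case-split on the grammar of limit types. The bases $\Unit$ and $\Nat$ are trivial since the interpretation is $\beta$-independent. The case $\Later A$ holds by definition, as $\vinterpN{\Later A}{\cdot,\beta}$ at limit $\beta$ is declared to be exactly that intersection. Products $L_1\times L_2$ and sums $L_1+L_2$ reduce immediately to the IH on their components. For $\Fix\,\alpha.L$, I would observe that $L[\Later(\Fix\,\alpha.L)/\alpha]$ is again a limit type of strictly smaller size (occurrences of $\alpha$ in a limit type sit either bare or under some $\Later(-)$, and $\Later X$ is a limit-type former for arbitrary $X$), so the IH applies to the unfolded type. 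For $\Box L$ the definition only requires $\unbox\,t\in\tinterpN{L}{\emptyset,\ol\eta',\alpha,\beta}$ for each $\ol\eta'$, which follows at once from the term-level IH on $L$.

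The two substantive cases are $\Delay L$ and $A\to L$. For $\Delay L$ at limit $\beta$, if $l$ lies in $\vinterpN{\Delay L}{\cdot,\beta'}$ for every $\beta'<\beta$, then for each $\gamma<\beta$ I pick the successor witness $\beta'=\gamma+1<\beta$ (valid since $\beta$ is a limit) to extract $\adv\,l\in\tinterpN{L}{\gc{\sigma}\tick\eta,\ol\eta,\alpha,\gamma}$; the term-level IH on $L$ then collapses the resulting intersection to $\tinterpN{L}{\gc{\sigma}\tick\eta,\ol\eta,\alpha,\beta}$, which is exactly what membership of $l$ in $\vinterpN{\Delay L}{\cdot,\beta}$ requires. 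For $A\to L$, suppose $\lambda x.t$ lies in the intersection; fix $\beta''\le\beta$, a renaming $\psi:(t,\sigma,\ol\eta)\to(t',\sigma',\ol\eta')$, and a value $v\in\vinterpN{A}{\gc{\sigma'},\ol\eta',\alpha,\beta''}$. If $\beta''<\beta$, pick any $\beta'$ with $\beta''\le\beta'<\beta$ and invoke membership at $\beta'$ directly; the genuine limit case is $\beta''=\beta$, where Kripke monotonicity gives $v\in\vinterpN{A}{\gc{\sigma'},\ol\eta',\alpha,\beta'''}$ for every $\beta'''<\beta$, and membership of $\lambda x.t$ at each such $\beta'''$ yields $t'[v/x]\in\tinterpN{L}{\gc{\sigma'},\ol\eta',\alpha,\beta'''}$, so $t'[v/x]$ ends up in the intersection and therefore in $\tinterpN{L}{\gc{\sigma'},\ol\eta',\alpha,\beta}$ by the term-level IH on $L$.

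The term statement for a given $L$ follows from the value statement via determinism of $\Downarrow$: if $t\in\bigcap_{\beta'<\beta}\tinterpN{L}{\sigma,\ol\eta,\alpha,\beta'}$, there is a single evaluation $\heval{t}{\sigma}{v}{\sigma'}$ witnessing all $\beta'$ simultaneously, and $v\in\vinterpN{L}{\sigma',\ol\eta,\alpha,\beta'}$ for every $\beta'<\beta$ gives $v\in\vinterpN{L}{\sigma',\ol\eta,\alpha,\beta}$ by the value part already proved, so $t\in\tinterpN{L}{\sigma,\ol\eta,\alpha,\beta}$. I expect the $A\to L$ clause to be the main obstacle: the definition of the function space quantifies over $\beta''\le\beta$ rather than strictly below, so membership at $\beta$ is not an immediate consequence of membership at each $\beta'<\beta$ and has to be closed up by a separate Kripke-plus-IH argument in the limit case $\beta''=\beta$.
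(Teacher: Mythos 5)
Your proposal is correct and follows the paper's own (very terse) argument exactly: the right-to-left inclusion by Kripke monotonicity, the left-to-right inclusion by induction on the (size of the) type with the only non-definitional work in the $\Delay L$ and $A\to L$ clauses, and the term-level equality derived from the value-level one via determinism of evaluation. Your expanded case analysis, including the observation that $L[\Later(\Fix\,\alpha.L)/\alpha]$ is again a limit type of strictly smaller size, fills in the details the paper omits without deviating from its route.
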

\begin{proof}
  In the first equality, the inclusion from right to left follows from
  \autoref{lem:kripke}, and the other inclusion is proved by induction
  on $A$. The second equality then follows from the first.
\end{proof}

In the special case where $A$ is a limit type, $\Delay$ and $\Later$
do in fact coincide:
\begin{corollary}[Sub-modality at limit]
  Given $A$ and a world $w$ s.t. $A \, \limit$, then
  \begin{align*}
    \vinterpN{\Delay A}{w} =
    \vinterpN{\Later A}{w}
  \end{align*}
\end{corollary}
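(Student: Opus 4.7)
The plan is to establish the equality by case analysis on the world $w = (\sigma, (\eta; \ol{\eta}), \alpha, \beta)$, matching the three clauses in the definitions of $\vinterpN{\Delay A}{w}$ and $\vinterpN{\Later A}{w}$ from \autoref{fig:logical-relation}. If $\sigma = \nullstore$, both sides are undefined/empty in parallel, so there is nothing to show. If $\beta = 0$ or $\beta = \beta' + 1$ is a successor, then the defining clauses of $\vinterpN{\Delay A}{w}$ and $\vinterpN{\Later A}{w}$ coincide verbatim by inspection, and the equality holds without any hypothesis on $A$.

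The interesting case is a limit ordinal $\beta$ with $\sigma \neq \nullstore$. Here the $\Delay$ side unfolds to
\[
  l \in \vinterpN{\Delay A}{w} \iff \adv \, l \in \tinterpN{A}{\gc{\sigma}\tick\eta,\ol{\eta},\alpha,\beta},
\]
while the $\Later$ side is the intersection $\bigcap_{\beta' < \beta} \vinterpN{\Later A}{\sigma,(\eta;\ol{\eta}),\alpha,\beta'}$. The plan is to transport both expressions into a common form. First, applying \autoref{lem:limit_types} to the limit type $A$ rewrites the $\Delay$ condition as $\adv \, l \in \bigcap_{\beta' < \beta} \tinterpN{A}{\gc{\sigma}\tick\eta,\ol{\eta},\alpha,\beta'}$. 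On the $\Later$ side, the successor ordinals $\beta'' + 1 < \beta$ are cofinal in $\beta$, and for such $\beta''+1$ the successor clause of $\Later$ reduces to $\adv \, l \in \tinterpN{A}{\gc{\sigma}\tick\eta,\ol{\eta},\alpha,\beta''}$. By the downward monotonicity of \autoref{lem:kripke}, the intersection over all $\beta' < \beta$ agrees with the intersection over this cofinal family of successors, producing the same condition as on the $\Delay$ side.

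A minor subtlety must be addressed: at $\beta' = 0$ the set $\vinterpN{\Later A}{\sigma,(\eta;\ol{\eta}),\alpha,0}$ equals $\dom{\gc{\sigma}}$, whereas support of $\adv\, l$ a priori only ensures $l \in \dom{\gc{\sigma}\tick\eta} = \dom{\gc{\sigma}} \cup \dom{\eta}$. I plan to close this gap by invoking \autoref{lem:kripke} in the direction $\vinterpN{\Delay A}{w} \subseteq \vinterpN{\Delay A}{\sigma,(\eta;\ol{\eta}),\alpha,0} = \dom{\gc{\sigma}}$, forcing $l \in \dom{\gc{\sigma}}$ and so reconciling the $\beta' = 0$ component of the intersection with the $\Delay$ description.

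I expect the main obstacle to be the bookkeeping around step indices at the limit ordinal: aligning the intersection-based clause of $\Later$ with the value-at-$\beta$ clause of $\Delay$ requires keeping track of which ordinals contribute to the intersection and knowing that successors are cofinal. The real work is done by \autoref{lem:limit_types}, which is precisely the property of limit types that justifies the name; the rest is routine unfolding and use of Kripke monotonicity.
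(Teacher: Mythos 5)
Your proposal is correct and follows essentially the same route as the paper: both arguments observe that the two clauses coincide verbatim except at limit ordinals, and at a limit ordinal both hinge on \autoref{lem:limit_types} to commute the intersection over $\beta' < \beta$ past $\mathcal{T}\llbracket A \rrbracket$, together with the cofinality of successor ordinals in $\beta$ so that the $\Later$-intersection yields exactly the conditions $\adv\,l \in \tinterpN{A}{\gc{\sigma}\tick\eta,\ol{\eta},\alpha,\beta'}$ for all $\beta' < \beta$. The only cosmetic difference is that the paper dispatches the inclusion $\vinterpN{\Delay A}{w} \subseteq \vinterpN{\Later A}{w}$ by citing \autoref{lem:submodal}, whereas you fold it into the set computation and handle the $\beta'=0$ component by Kripke monotonicity down to index $0$; both are valid.
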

\begin{proof}
  One inclusion always holds by \autoref{lem:submodal}, and the two
  sets are equal by definition except when $\beta$ is a limit
  ordinal. In that case, by \autoref{lem:limit_types} it suffices to
  show that if
  $v \in \vinterpN{\Later A}{\sigma,(\eta;\ol{\eta}),\alpha,\beta}$ then
  $\adv\;v \in \tinterpN{A}{(\gc{\sigma}\tick\eta);\ol{\eta},\alpha,\beta'}$
  for all $\beta'<\beta$, which follows from
  $v \in \vinterpN{\Later A}{\sigma,(\eta;\ol{\eta}),\alpha,\beta'+1}$
\end{proof}

Finally, we obtain the semantic soundness of the language phrased as the following
fundamental property of the logical relation $\tinterpN{A}{\sigma,\ol{\eta},\omega,\beta}$.
\begin{theorem}[Fundamental Property]
  \label{thr:lrl}
  If $\hastype{\Gamma}{t}{A}$ and $\gamma \in \cinterpN{\Gamma}{\sigma,\ol{\eta},\beta}$,
  then $t\gamma \in \tinterpN{A}{\sigma,\ol{\eta},\omega,\beta }$. 
\end{theorem}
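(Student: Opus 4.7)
The plan is to prove the fundamental property by induction on the typing derivation $\hastype{\Gamma}{t}{A}$. Before starting, I would need to fix a precise definition of the context interpretation $\cinterpN{\Gamma}{\sigma,\ol{\eta},\beta}$ that reflects the stratification of $\Gamma$ by $\lock$ and tick tokens: variables to the left of the $\lock$ are interpreted at the empty store (reflecting stability), variables between $\lock$ and a tick $\tick[m]$ are interpreted at $\gc{\sigma}$ (the store before the current tick), and variables to the right of a tick live at the full store $\sigma$. Variables of stable type must be allowed to cross any token. This will make the standard lambda-calculus cases (variable, $\unit$, pairing, projection, case, $\interm_i$, application, $\zero$, $\suc$, $\recN$) go through by unfolding $\mathcal{V}$, invoking the IH on subterms, and aligning worlds via the Kripke monotonicity of Lemma~\ref{lem:kripke}.

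The modal cases are the heart of the argument. For $\delay\,t : m\,A$, the machine rule allocates a fresh location $l$ and stores $t$ there, so I would show $l \in \vinterpN{m\,A}{(\sigma,l\mapsto t),\ol\eta,\omega,\beta}$ by invoking the IH on the premise $\Gamma,\tick[m]\vdash t:A$ after shifting the store by a tick, using that the extension of $\gamma$ remains in $\cinterpN{\Gamma,\tick[m]}{\cdot}$. The case $\adv\,t$ with $m\le m'$ or $A\,\limit$ unfolds by the machine rule and is discharged directly when $m=m'$, by Lemma~\ref{lem:submodal} when $\Delay<\Later$, and by the sub-modality-at-limit corollary when $A$ is a limit type. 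For $\rbox\,t : \Box A$ and $\unbox\,t$, I unfold the $\Box$-interpretation: in the $\rbox$ case, I must show $\unbox(\rbox\,t)$ evaluates into $\tinterpN{A}{\emptyset,\ol\eta',\omega,\beta}$ for arbitrary $\ol\eta'$, which by the machine reduces to evaluating $t\gamma$ itself under $\cinterpN{\Gamma,\lock}{\emptyset,\ol\eta',\beta}$, and the IH closes the goal.

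For the $\Until$-introduction rules $\now\,t$ and $\wait\,s\,t$ the IH applied to the premises gives the required values, with $\wait$ inhabiting some $\alpha' < \omega$ by choosing $\alpha'$ one larger than the one obtained from the IH on $t$. The $\recU$ case requires nested well-founded induction on the $\alpha$-index of the scrutinee $u\gamma$: the outer IH gives $u\gamma \in \tinterpN{A\Until B}{\sigma,\ol\eta,\omega,\beta}$, which evaluates to either $\now\,v$ (reducing to the IH on $s$ with $v$ substituted for $x$) or to $\wait\,v_1\,v_2$ with $v_2 \in \vinterpN{\Delay(A\Until B)}{\sigma,\ol\eta,\alpha',\beta}$ for some $\alpha'<\alpha$, in which case the machine rule places the recursive call $\recU*(\ldots,\adv\,v_2)$ on the heap, and the interpretation of $\Delay C$ together with the inner IH at $\alpha'$ supplies the needed element of $\tinterpN{C}{\ldots}$.

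The guarded fixed point case is where I expect the main difficulty. To show $\fix\,x.t \in \vinterpN{\Box A}{w}$, I must establish $\unbox(\fix\,x.t) \in \tinterpN{A}{\emptyset,\ol\eta',\omega,\beta}$ for every $\ol\eta'$ and $\beta$, and I would proceed by transfinite induction on $\beta$. By the machine rule for $\unbox$ of a fixed point, this reduces to evaluating $t$ with $x$ replaced by $\rbox(\delay(\unbox(\fix\,x.t)))$. I must argue that this replacement value lies in $\vinterpN{\Box\Later A}{\emptyset,\ol\eta',\omega,\beta}$, which reduces to showing $l \in \vinterpN{\Later A}{\cdot,\beta}$ for the location allocated by $\delay$; at successor $\beta = \beta'+1$ this follows from the inner IH at $\beta'$, while at limit $\beta$ the $\Later$-interpretation is the intersection $\bigcap_{\beta'<\beta}\vinterpN{\Later A}{\cdot,\beta'}$, and I can supply each member by the inner IH at $\beta'+1$. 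Extending $\gamma$ with this witness yields a substitution in $\cinterpN{\Gamma,x:\Box\Later A,\lock}{\emptyset,\ol\eta',\beta}$, and applying the outer structural IH to $t$ closes the case. The subtle point is that the interpretation of $\Later$ at limit ordinals is defined precisely to make this intersection argument work, and verifying that renaming, support, and Kripke closure properties propagate correctly through the nested induction will require the most care.
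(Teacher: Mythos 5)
Your proposal follows exactly the paper's route: a (lengthy but standard) induction on the typing derivation, with the context interpretation of Figure~\ref{fig:context_relation} handling the $\lock$/tick stratification, Lemma~\ref{lem:kripke} and the sub-modality results discharging the $\adv$ cases, a nested induction on the $\alpha$-index for $\recU*$, and transfinite induction on $\beta$ exploiting the limit-ordinal intersection definition of $\Later$ for $\fix$. The only cosmetic remark is that the context relation you propose to "fix" is already given in the paper and matches your description, so your sketch is correct and essentially identical in structure to the paper's proof.
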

Here $\cinterpN{\Gamma}{\sigma,\ol{\eta},\beta}$ refers to the logical
relation for typing contexts defined in
\autoref{fig:context_relation}. Note the cases for $\Gamma,\tick[m]$,
which captures the intuition that variables occurring before
$\tick[m]$ arrive one time step before those to the right. Again
$\Delay$ and $\Later$ differ only when $\beta$ is a limit
ordinal. Cases not mentioned in the figure (such as
$\cinterpN{\cdot}{\sigma,\ol{\eta},\beta}$ for $\sigma\neq\nullstore$)
are interpreted as the empty set.
The theorem is proved by a lengthy but entirely standard induction on
the typing relation $\hastype{\Gamma}{t}{A}$.


\begin{figure}
 \begin{align*}
   \cinterpN{\cdot}{\nullstore,\ol{\eta},\beta}
   &= \set{\ast}
   \\
   \cinterpN{\Gamma,x:A}{\sigma,\ol{\eta},\beta}
   &= \setcom{\gamma[x \mapsto v]}
     {\gamma \in \cinterpN{\Gamma}{\sigma,\ol{\eta},\beta},v
     \in\vinterpN{A}{\sigma,\ol{\eta
     },\omega,\beta}}
   \\
   \cinterpN{\Gamma,\tick[\Later]}{(\eta_N\tick\eta_L),\ol{\eta},\beta}
   &= \cinterpN{\Gamma}{\eta_N,(\eta_L;\ol{\eta}),\beta+1}
   \\
   \cinterpN{\Gamma,\tick[\Delay]}{(\eta_N\tick\eta_L),\ol{\eta},\beta}
   &=
     \begin{cases}
     \cinterpN{\Gamma}{\eta_N,(\eta_L;\ol{\eta}),\beta} &  \beta \text{
       limit ordinal} \\
     \cinterpN{\Gamma}{\eta_N,(\eta_L;\ol{\eta}),\beta+1} & \text{otherwise}
     \end{cases}
   \\
   \cinterpN{\Gamma,\lock}{\sigma,\ol{\eta},\beta}
   &= \bigcup_{\ol{\eta}'}\cinterpN{\Gamma}{\nullstore,\ol{\eta}',\beta}
   \hspace{15mm} \sigma \neq \nullstore
 \end{align*}
 \caption{Context Relation}
 \label{fig:context_relation}
\end{figure}

As an easy consequence of the fundamental property and the fact the
empty substitution is an element of
$\cinterpN{\cdot,\lock}{\sigma,\ol\eta,\beta}$ for any store $\sigma$
and input sequence $\ol\eta$, we have the following property that we
shall use to prove Lively RaTT's operational properties:
\begin{corollary}[Fundamental Property]
  \label{cor:lrl}
  If $\hastype{\lock}{t}{A}$, then
  $t \in \tinterpN{A}{\sigma,\ol\eta,\omega,\beta}$ for all
  $\sigma, \ol\eta$ and $\beta$.
\end{corollary}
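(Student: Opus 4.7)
The plan is to derive the corollary directly from the Fundamental Property (Theorem 3.4) by supplying the empty substitution. Since the only entry in the context $\lock$ is the token separating static from dynamic variables, there are no variables to substitute for, and applying any substitution to $t$ leaves $t$ unchanged. All the work is in checking that the empty substitution $\ast$ actually belongs to the semantic context relation $\cinterpN{\lock}{\sigma,\ol\eta,\beta}$.

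First I would instantiate Theorem 3.4 with $\Gamma = \lock$, the term $t$ and the type $A$. This reduces the corollary to showing that $\ast \in \cinterpN{\lock}{\sigma,\ol\eta,\beta}$ for every store $\sigma$, every input sequence $\ol\eta$, and every ordinal $\beta$, so that the conclusion $t\ast = t \in \tinterpN{A}{\sigma,\ol\eta,\omega,\beta}$ follows immediately.

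Next, I would unfold the definition of $\cinterpN{-}{-}$ on the clause for a context ending with $\lock$. In the main case, where $\sigma \neq \nullstore$, we have by Figure 3 the equality
\begin{equation*}
  \cinterpN{\lock}{\sigma,\ol\eta,\beta} \;=\; \bigcup_{\ol\eta'} \cinterpN{\cdot}{\nullstore,\ol\eta',\beta} \;=\; \bigcup_{\ol\eta'}\set{\ast} \;=\; \set{\ast},
\end{equation*}
so that picking any $\ol\eta'$ in the union witnesses $\ast$ as an element. The degenerate case $\sigma = \nullstore$ needs a brief separate remark: either one observes that the applications downstream (productivity, termination, liveness, causality) always instantiate the corollary with $\sigma = \emptyset \neq \nullstore$, or one checks that the supporting conditions in the Kripke closure of $\tinterpN{A}{-}$ reduce this case to the non-null one via the empty heap.

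I do not anticipate a genuine obstacle here; the corollary is really just a packaging lemma for Theorem 3.4. The only subtlety worth being explicit about in the write-up is why the union over all future input sequences $\ol\eta'$ in the $\lock$ clause is non-empty and contains $\ast$, which is exactly the content of the base case $\cinterpN{\cdot}{\nullstore,\ol\eta',\beta} = \set{\ast}$. Once that is spelled out, a single application of Theorem 3.4 closes the proof.
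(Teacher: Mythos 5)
Your proposal is correct and follows exactly the paper's (implicit) argument: instantiate the fundamental property at $\Gamma = \lock$ with the empty substitution, which lies in $\cinterpN{\cdot,\lock}{\sigma,\ol\eta,\beta}$ because that set unfolds to $\bigcup_{\ol\eta'}\cinterpN{\cdot}{\nullstore,\ol\eta',\beta}=\set{\ast}$. Your extra remark that the $\sigma=\nullstore$ case is degenerate (the context relation is empty there, and all downstream uses take $\sigma=\emptyset$) is a reasonable clarification of a caveat the paper leaves implicit, but it does not change the proof.
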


\subsection{Productivity, termination, liveness \& causality}
\label{sec:prod-term-}

In this section we demonstrate how we apply the fundamental property
of the logical relation to prove the operational properties of Lively
RaTT that we presented in \autoref{sec:small-step-oper} and
\autoref{sec:reactive-small-step}. We have formulated these
operational properties in terms of value types, so that we can use the
following correspondence between semantic and syntactic typing:
\begin{lemma}
  \label{lem:vinterp_value}
  Given any world $w$, value type $A$, and value $v$, we have that
  $v\in \vinterpN{A}{w}$ iff  $\hastype{}{v}{A}$.
\end{lemma}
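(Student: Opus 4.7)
The plan is to prove this by structural induction on the value type $A$, whose grammar is restricted to $\Unit$, $\Nat$, and the closures $U \times V$ and $U + V$. The key observation is that on value types the definition of $\vinterpN{A}{w}$ in \autoref{fig:logical-relation} is independent of the world $w$ and mirrors exactly the syntactic typing rules for values of type $A$ in \autoref{fig:typing}. Both directions of the biconditional will thus follow by matching the two definitions case by case.

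For the base cases, the proof is immediate: $\vinterpN{\Unit}{w} = \set{\unit}$ by definition, and by inversion on the typing rules no value other than $\unit$ can be assigned type $\Unit$. For $A = \Nat$, $\vinterpN{\Nat}{w} = \setcom{\suc^n \, \zero}{n \in \nats}$ by definition, while a sub-induction on the typing derivation of $\hastype{}{v}{\Nat}$ shows that any value of type $\Nat$ must be a numeral $\suc^n \, \zero$, since the only applicable typing rules are those for $\zero$ and $\suc$.

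For the inductive cases, I would use simultaneous inversion on the value grammar and the typing rules. A value $v$ with $\hastype{}{v}{U \times V}$ must be of the form $\pair{v_1}{v_2}$ with $\hastype{}{v_1}{U}$ and $\hastype{}{v_2}{V}$, since no other value former in the grammar produces a product type. By the induction hypothesis, $v_i \in \vinterpN{U}{w}$ and $v_2 \in \vinterpN{V}{w}$, so $\pair{v_1}{v_2} \in \vinterpN{U \times V}{w}$ by the definition. The converse direction is symmetric, using that values in $\vinterpN{U \times V}{w}$ are pairs whose components lie in $\vinterpN{U}{w}$ and $\vinterpN{V}{w}$. The sum case $A = U + V$ proceeds identically, with $\interm_i \, v'$ in place of $\pair{v_1}{v_2}$.

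No substantial obstacle arises here; the statement is essentially a consistency check between the two inductively defined notions of typing on the value-type fragment. The only point worth stating carefully is that inversion on the typing relation is what aligns the syntactic shape of $v$ with the clause of $\vinterpN{A}{w}$ that one needs to apply the induction hypothesis.
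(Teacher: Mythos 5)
Your proposal is correct and matches the paper's proof, which is simply stated as ``by a straightforward induction on $A$''; your case analysis via inversion on the typing rules and the value grammar is exactly the fleshed-out version of that induction.
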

\begin{proof}
  By a straightforward induction on $A$.
\end{proof}

\subsubsection{Productivity}

We start with the productivity property of streams of type $\Str A$.

Given a type $A$ and ordinal $\beta$, we define the following set of
machine configurations for $\forwards{}$ that are safe according to the
logical relation:
\[
  S(A,\beta) = \setcom{\state{t}{\eta}}{t \in \tinterpN{\Str
      A}{\eta\tick,\ol{\emptyset},\omega,\beta}}
\]
where we use the notation $\ol\emptyset$ for the sequence of empty
heaps with the appropriate namespace.

Intuitively speaking, a machine configuration $c$ in $S(A,\beta)$ will
be safe to execute for the next $\beta$ steps of $\forwards{}$ and
produces output of type $A$. We formulate the essence of the
productivity property of such a stream as follows:
\begin{lemma}[productivity]
  \label{lem:productivity}
  If $\state{t}{\eta} \in S(A,\beta+1)$, then there are
  $\state{t'}{\eta'} \in S(A,\beta)$ and
  $v \in \vinterpN{A}{\eta',\ol\emptyset,\omega,\beta+1}$ such that
  $\state{t}{\eta} \forwards{v} \state{t'}{\eta'}$.
\end{lemma}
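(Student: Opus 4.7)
The plan is to unfold the logical relation for streams to expose the head $v$ and tail $w$ of the evaluated stream, apply the stream step rule, and then verify the two closure conditions. First I would unfold $\state{t}{\eta} \in S(A,\beta+1)$ to $t \in \tinterpN{\Str A}{\eta\tick,\ol\emptyset,\omega,\beta+1}$. Since $\eta\tick \ne \nullstore$, the definition of the term interpretation provides an evaluation $\heval{t}{\eta\tick}{v^\ast}{\sigma'}$ with $v^\ast \in \vinterpN{\Str A}{\sigma',\ol\emptyset,\omega,\beta+1}$. A routine invariant of the evaluation semantics guarantees that the output store retains the tick structure, so $\sigma' = \eta_N \tick \eta_L$ for some $\eta_N,\eta_L$. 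Since $\Str A = \Fix\,\alpha.A\times\alpha$ unfolds to $A \times \Later\Str A$, the clauses of the value interpretation for recursive and product types give $v^\ast = v :: w$ with $v \in \vinterpN{A}{\sigma',\ol\emptyset,\omega,\beta+1}$ and $w \in \vinterpN{\Later\Str A}{\sigma',\ol\emptyset,\omega,\beta+1}$. The stream rule of \autoref{fig:stream_machine} then produces $\state{t}{\eta} \forwards{v} \state{\adv\,w}{\eta_L}$, so I take $t' = \adv\,w$ and $\eta' = \eta_L$.

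To verify $\state{t'}{\eta'} \in S(A,\beta)$ I would unfold the successor-ordinal case of $w \in \vinterpN{\Later\Str A}{\sigma',\ol\emptyset,\omega,\beta+1}$, which yields $\adv\,w \in \tinterpN{\Str A}{\gc{\sigma'}\tick,\ol\emptyset,\omega,\beta}$. Since $\gc{\eta_N\tick\eta_L} = \eta_L = \eta'$, this is precisely the membership condition for $\state{\adv\,w}{\eta'}$ in $S(A,\beta)$. The remaining task is to derive $v \in \vinterpN{A}{\eta',\ol\emptyset,\omega,\beta+1}$ from $v \in \vinterpN{A}{\sigma',\ol\emptyset,\omega,\beta+1}$, i.e., to restrict the world from $\sigma'$ to its garbage-collected portion $\eta'$.

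This last restriction is the main obstacle. The Kripke monotonicity of \autoref{lem:kripke} propagates membership from smaller to larger worlds, i.e., along store extension, whereas here I need to move in the opposite direction, collapsing away the now-heap $\eta_N$. The semantic justification is that heads of stream values cannot refer to locations in the about-to-be-discarded now-heap, mirroring the typing restriction that forbids stream-head positions to the right of a tick. I expect this to follow from a support-based induction on $A$: the successor-ordinal clauses of $\vinterpN{\Later A}{-}$, $\vinterpN{\Delay A}{-}$, and $\vinterpN{\Box A}{-}$ already depend only on $\gc{\sigma}$, so the property holds at the modal types, and it then propagates through the non-modal constructors by a structural argument on the derivation of $v^\ast$.
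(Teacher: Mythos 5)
Your proof follows essentially the same route as the paper's: unfold the term and value relations for $\Str A$, use the store-extension invariant of evaluation to expose the output store as $\eta_N\tick\eta_L$, apply the step rule, and discharge the tail obligation $\state{\adv\,w}{\eta_L}\in S(A,\beta)$ via the successor clause of $\vinterpN{\Later \Str A}{-}$. The step you flag as the main obstacle is indeed handled by a separate garbage-collection lemma stating that the value relation depends only on $\gc{\sigma}$, and your sketch of why it holds is correct — though the argument is a plain induction on the structure of $A$ (the modal clauses already factor through $\gc{\sigma}$, and the support condition in the function clause only becomes weaker when the now-heap is dropped) rather than anything specifically ``support-based''.
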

In each step of a stream computation, we count down by one on the step
index $\beta$ and produce an output $v$ of semantic type $A$:
\begin{proof}[Proof of \autoref{thr:productivity}]
  By \autoref{cor:lrl} we have that
  $\state{\unbox\,t}{\emptyset} \in S(A,\beta)$ for any $\beta$. Using
  \autoref{lem:productivity}, we can thus extend any finite reduction
  sequence
  \[
    \state{\unbox\,t}{\emptyset} \forwards{v_1} c_1
    \forwards{v_2} c_2 \forwards{v_3}
    \cdots \forwards{v_n} c_n
  \]
  by an additional reduction step $c_n \forwards{v_{n+1}}
  c_{n+1}$. Since $\forwards{}$ is deterministic, this uniquely
  defines the desired infinite reduction. Moreover, given that $A$ is
  a value type, $\hastype{}{v_i}{A}$ follows for all $i\ge 1$ by
  \autoref{lem:vinterp_value}.
\end{proof}

\subsubsection{Termination}
\label{sec:termination}

Analogously to the set of machine states $S(A,\beta)$ for stream
types, we define the following set $U(A,B,\alpha,\beta)$ for
until types:
\[
  U(A,B,\alpha,\beta) = \setcom{\state{t}{\eta}}{t \in \tinterpN{A\Until
      B}{\eta\tick,\ol{\emptyset},\alpha,\beta}}
\]

This definition allows us to state the essence of the termination
property for until types as follows:
\begin{lemma}[termination]
  \label{lem:termination}
  Given $\state{t}{\eta} \in U(A,B,\alpha,\beta)$, one of the
  following two statements holds:
  \begin{enumerate}[(a)]
  \item There are $t'$, $\eta'$, $\alpha' < \alpha$, and $v \in \vinterpN{A}{
      \eta',\ol\emptyset,\omega,\beta}$ such that
    \[
      \state{t}{\eta} \forwardu{v} \state{t'}{\eta'} \text{
        and if } \beta > 0 \text{ then } \state{t'}{\eta'} \in
      U(A,B,\alpha',\beta-1)
    \]
    where $\beta-1 = \beta'$ if $\beta = \beta'+1$ and otherwise
    $\beta-1=\beta$. 
    \label{item:terminationA}
  \item There is some
    $v \in \vinterpN{B}{\eta',\ol\emptyset,\omega,\beta}$ such that
    $\state{t}{\eta} \forwardu{v} \state{\mathsf{HALT}}{\eta'}$.
    \label{item:terminationB}
  \end{enumerate}
\end{lemma}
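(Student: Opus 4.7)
The plan is to unfold the definition of $U(A,B,\alpha,\beta)$, extract the value to which $t$ evaluates, case-split on its shape, and use the step semantics rules from \autoref{fig:stream_machine} to produce the required reduction.

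First, I would unfold $\state{t}{\eta} \in U(A,B,\alpha,\beta)$ to obtain $t \in \tinterpN{A \Until B}{\eta\tick\emptyset, \ol\emptyset, \alpha, \beta}$. By the definition of the term relation, there exist $\sigma'$ and value $v$ with $\heval{t}{\eta\tick\emptyset}{v}{\sigma'}$ and $v \in \vinterpN{A \Until B}{\sigma', \ol\emptyset, \alpha, \beta}$. Since evaluation preserves the ``two-heap'' shape of the store, $\sigma' = \eta_N' \tick \eta_L'$ for some heaps $\eta_N', \eta_L'$. Set $\eta' = \eta_L' = \gc{\sigma'}$. By the value clause for $A \Until B$ in \autoref{fig:logical-relation}, either $v = \now\,v'$ with $v' \in \vinterpN{B}{\sigma', \ol\emptyset, \omega, \beta}$, or $v = \wait\,v_1\,w$ with $v_1 \in \vinterpN{A}{\sigma', \ol\emptyset, \omega, \beta}$ and $w \in \vinterpN{\Delay(A\Until B)}{\sigma', \ol\emptyset, \alpha', \beta}$ for some $\alpha' < \alpha$.

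In the $\now$-case, the third rule of $\forwardu{}$ yields $\state{t}{\eta} \forwardu{v'} \state{\mathsf{HALT}}{\eta'}$, which is case~\emph{(b)}. In the $\wait$-case, the second rule yields $\state{t}{\eta} \forwardu{v_1} \state{\adv\,w}{\eta'}$, and we set $t' = \adv\,w$. When $\beta > 0$, the residual condition $\state{\adv\,w}{\eta'} \in U(A,B,\alpha',\beta-1)$ is obtained by unfolding the $\Delay$ clause: if $\beta = \beta_0 + 1$, the membership $w \in \vinterpN{\Delay(A\Until B)}{\sigma',\ol\emptyset,\alpha',\beta_0+1}$ directly gives $\adv\,w \in \tinterpN{A\Until B}{\gc{\sigma'}\tick\emptyset, \ol\emptyset, \alpha', \beta_0} = \tinterpN{A \Until B}{\eta'\tick\emptyset, \ol\emptyset, \alpha', \beta-1}$; if $\beta$ is a limit ordinal, the analogous unfolding gives the same conclusion with $\beta$ itself, matching the stated convention $\beta-1 = \beta$. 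When $\beta = 0$, no residual is required, and $\alpha' < \alpha$ is exactly what the $\wait$-clause provides.

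The main obstacle is the semantic typing of the emitted value: we have $v' \in \vinterpN{B}{\sigma', \ol\emptyset, \omega, \beta}$ (respectively $v_1 \in \vinterpN{A}{\sigma', \ol\emptyset, \omega, \beta}$), but the lemma requires membership at the \emph{garbage-collected} world $\eta' = \gc{\sigma'}$. Since $\eta' \tickle \sigma'$, \autoref{lem:kripke} gives the inclusion in the wrong direction, so Kripke monotonicity alone does not suffice. I would close this gap with a separate strengthening lemma stating that whenever $\heval{t}{\eta\tick\emptyset}{v}{\eta_N'\tick\eta_L'}$ with $t \support (\eta\tick\emptyset,\ol\eta)$, any $v \in \vinterpN{C}{\eta_N'\tick\eta_L', \ol\eta, \alpha, \beta}$ already lies in $\vinterpN{C}{\eta_L', \ol\eta, \alpha, \beta}$. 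The intuition is that a value's free locations can only point into $\eta_L'$: the $\Delay$/$\Later$ clauses require locations to lie in $\dom{\gc{\sigma}}$; compound constructors ($\pair{-}{-}$, $\interm_i$, $\into$, $\wait$, $\now$) recurse structurally; and closures ($\lambda x.t$, $\rbox\,t$) constructed during evaluation only capture variables/locations already supported by the initial input, which resides in $\eta_L'$. This lemma is proved by induction on the type, paralleling the clauses of \autoref{fig:logical-relation}, and then applied to the sub-values $v'$ and $v_1$ to conclude the proof.
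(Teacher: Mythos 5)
Your proof is correct and follows essentially the same route as the paper's: unfold the term relation at $\eta\tick$, case-split on the resulting $\Until$-value, read off the corresponding step-semantics rule (noting that the $\wait$-clause hands you exactly the $\alpha'<\alpha$ and, via the $\Delay$-clause at successor resp.\ limit $\beta$, the residual membership in $U(A,B,\alpha',\beta-1)$), and bridge the mismatch between the final store $\sigma'=\eta_N'\tick\eta_L'$ and its garbage collection $\eta'=\eta_L'$ by an auxiliary inclusion $\vinterpN{C}{\eta_N'\tick\eta_L',\ol{\eta},\alpha,\beta}\subseteq\vinterpN{C}{\eta_L',\ol{\eta},\alpha,\beta}$. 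One small correction to your justification of that auxiliary lemma: it is a purely semantic fact needing neither the evaluation hypothesis nor an analysis of which locations occur in $v$ — every clause of the value relation consults the store only through $\gc{-}$, and $\gc{\eta_N'\tick\eta_L'}=\gc{\eta_L'}=\eta_L'$, while supportedness with respect to the larger store implies supportedness with respect to the smaller one (the only delicate case being $A\to B$, where a renaming out of $(t,\eta_L',\ol{\eta})$ must be repackaged as one out of $(t,\eta_N'\tick\eta_L',\ol{\eta})$ with the same garbage-collected target).
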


\autoref{thr:termination} is now an easy consequence of the above
lemma and the fundamental property of the logical relation.
\begin{proof}[Proof of \autoref{thr:termination}]
  By \autoref{cor:lrl} $\unbox\, t \in U(A,B,\omega,\omega)$, and by
  \autoref{lem:termination} we can construct the desired sequence of
  reductions. Since the index $\alpha$ strictly decreases each time we
  take a step of the form \ref{item:terminationA}, the sequence must
  eventually terminate with a step of the form
  \ref{item:terminationB}. Moreover, by \autoref{lem:vinterp_value},
  the output values $v_i$ have the desired type given that $A$ and $B$
  are value types.
\end{proof}

\subsubsection{Liveness}
\label{sec:fairness}

Recall that the step semantics of fair streams $\forwardf{}$ is a
machine whose configurations are tuples $\state* t \eta p$, where
$p\in\set{1,2}$ indicates the current mode of the computation. The
behaviour of the different modes is captured by the following
definition of the set $F(A,B,\alpha,\beta)$ of such pairs:
\begin{align*}
  F(A,B,\alpha,\beta) =\
  &\setcom{\state*{t}{\eta}{1}}{\state{t}{\eta} \in U(A, B\times \Later (B \Until (A \times \Later \Fair A
    B)),\alpha,\beta)}
  \\
  \cup &\setcom{\state*{t}{\eta}{2}}{\state{t}{\eta} \in U(B,A \times \Later \Fair A
    B,\alpha,\beta)}
\end{align*}
That is, if $p = 1$, then $t$ belongs semantically to an until type
$A \Until (B\times \Later \Fair* B A)$, and otherwise $t$ belongs to
$B \Until (A \times \Later \Fair A B)$.

With this characterisation, we can formulate the essence of the
liveness property for fair streams:
\begin{lemma}[liveness]
  \label{lem:fairness}
  Given $\state*{t}{\eta}{p} \in F(A,B,\alpha,\beta)$, one of
    the following statements is true:
     \label{item:livenessII}
    \begin{enumerate}[(a)]
    \item \label{item:livenessA} there are $t', \eta', \alpha' < \alpha$, and
      $v \in \vinterpN{A+B}{\eta',\ol\emptyset,\omega,\beta}$ such that
      \[
        \state*{t}{\eta}{p} \forwardf{v} \state*{t'}{\eta'}{p} \text{ and }
        \state*{t'}{\eta'}{p} \in F(A,B,\alpha',\beta') \text{ for all }\beta' <
        \beta.
      \]
    \item \label{item:livenessB} there are $t'$, $\eta'$ and
      $v \in \vinterpN{A+B}{\eta',\ol\emptyset,\omega,\beta}$ such that
      \[
        \state*{t}{\eta}{p} \forwardf{v} \state*{t'}{\eta'}{3-p} \text{ and }
        \state*{t'}{\eta'}{3-p} \in F(A,B,\omega,\beta') \text{ for all }\beta' <
        \beta.
      \]
    \end{enumerate}
\end{lemma}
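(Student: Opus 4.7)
\medskip

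The plan is to reduce the liveness lemma to the termination lemma by unfolding the definition of $F(A,B,\alpha,\beta)$ and observing that, modulo the mode bookkeeping, each step of $\forwardf{}$ is implemented by exactly one step of $\forwardu{}$.

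First, I will case split on the mode $p \in \set{1,2}$. Consider $p = 1$, so by definition $\state{t}{\eta} \in U(A,\, B\times\Later\Fair*{B}{A},\, \alpha,\beta)$. I would apply Lemma~\ref{lem:termination} to this configuration. If termination gives its case (a), with $\state{t}{\eta}\forwardu{v}\state{t'}{\eta'}$, $v\in\vinterpN{A}{\eta',\ol\emptyset,\omega,\beta}$, and $\state{t'}{\eta'}\in U(A, B\times\Later\Fair*{B}{A}, \alpha', \beta-1)$ for some $\alpha'<\alpha$, then the first step rule for fair streams yields $\state*{t}{\eta}{1}\forwardf{\interm_1 v}\state*{t'}{\eta'}{1}$. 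Since $v\in\vinterpN{A}{\ldots}$, we have $\interm_1 v\in\vinterpN{A+B}{\eta',\ol\emptyset,\omega,\beta}$, and Kripke monotonicity (\autoref{lem:kripke}) on $\beta$ turns $\state{t'}{\eta'}\in U(A,\ldots,\alpha',\beta-1)$ into $\state*{t'}{\eta'}{1}\in F(A,B,\alpha',\beta')$ for every $\beta'<\beta$. This gives clause~\ref{item:livenessA}.

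If termination instead yields its case (b), then $\state{t}{\eta}\forwardu{\pair{v}{w}}\state{\mathsf{HALT}}{\eta'}$ with $\pair{v}{w}\in\vinterpN{B\times\Later\Fair*{B}{A}}{\eta',\ol\emptyset,\omega,\beta}$. The matching fair-stream rule produces $\state*{t}{\eta}{1}\forwardf{\interm_2 v}\state*{\adv\,w}{\eta'}{2}$. By projection, $v\in\vinterpN{B}{\ldots}$, hence $\interm_2 v$ has the required semantic type in $A+B$. The key point is to show $\state*{\adv\,w}{\eta'}{2}\in F(A,B,\omega,\beta')$ for every $\beta'<\beta$, which by definition of $F$ at mode~$2$ amounts to $\adv\,w\in \tinterpN{\Fair*{B}{A}}{\eta'\tick,\ol\emptyset,\omega,\beta'}$. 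From $w\in\vinterpN{\Later\Fair*{B}{A}}{\eta',\ol\emptyset,\omega,\beta}$, Kripke monotonicity on $\beta$ gives membership at every $\beta'+1\le\beta$, and then the successor clause in the definition of $\vinterpN{\Later\,\cdot}{-}$ (\autoref{fig:logical-relation}) yields exactly the desired membership of $\adv\,w$ in the term relation after garbage collection. This is clause~\ref{item:livenessB}. The case $p=2$ is completely symmetric; the only additional detail is that the fair-stream rule wraps the continuation in $\out$, but since $\adv\,w\in\tinterpN{\Fair A B}{\ldots}$ reduces to a value of the form $\into(\cdot)$ whose payload inhabits $\vinterpN{A\Until(B\times\Later\Fair*{B}{A})}{\ldots}$ (by the $\Fix$ clause of the logical relation), $\out(\adv\,w)$ lands in the desired term interpretation.

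I expect the main obstacle to be the careful treatment of indices in the switch case: one has to check that the $\omega$ appearing in $F(A,B,\omega,\beta')$ is correctly recovered, not merely some $\alpha'$, because the new until-type is entered fresh. This is forced by the definition $\state{t}{\eta}\in U(B, A\times\Later\Fair A B,\omega,\beta')$, and follows because the value-relation clause for $\Later\Fair*{B}{A}$ mentions the term interpretation of the underlying until-type at the universal index $\omega$ (the $\Fair*{B}{A}$ type is itself an $\Until$-type, whose semantics ranges over all $\alpha$). A secondary subtlety is the $\beta$-bookkeeping when $\beta$ is a limit ordinal: in that case Lemma~\ref{lem:termination} returns a continuation at $\beta$ itself, and Kripke monotonicity then delivers membership at all $\beta'<\beta$ uniformly, matching the quantifier in the statement.
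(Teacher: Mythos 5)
Your proof is correct and follows essentially the same route as the paper's: unfold the definition of $F$ into the corresponding $U$-membership, invoke Lemma~\ref{lem:termination}, and match its two cases against the three $\forwardf{}$ rules, using Kripke monotonicity in $\beta$ and the successor clause of $\vinterpN{\Later\,\cdot}{-}$ to re-establish membership in $F$ at all $\beta'<\beta$. One small imprecision: the index $\omega$ in the switch case is not recovered because "$\Fair* B A$ is itself an $\Until$-type", but because the $\Until$ value relation already places the payload of a $\now$-value at $\alpha$-index $\omega$, so $w$ arrives at index $\omega$ and the $\Later$-clause simply carries that index through — this does not affect the argument.
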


The liveness result is now an easy consequence of the above lemma and
the fundamental property:
\begin{proof}[Proof of \autoref{thr:fairness}]
  By Theorem~\ref{cor:lrl} $\out\,(\unbox\,t) \in F(A,B,\omega,\omega+n)$
  for any $n$. Using \autoref{lem:fairness}, we can then show that we
  can extend any finite reduction sequence
  \[
    \state*{\out\,(\unbox\,t)}{\eta_0}{1} \forwardf{\interm_{p_1} v_1} \state*{t_1}{\eta_i}{p_1}
    \forwardf{\interm_{p_2}v_2} \state*{t_2}{\eta_0}{p_2} \forwardf{\interm_{p_3}v_3}
    \dots
    \forwardf{\interm_{p_n}v_n} \state*{t_n}{\eta_n}{p_n}
  \]
  with
  $\state*{t_n}{\eta_n}{p_n} \forwardf{\interm_{p_{n+1}}v_{n+1}}
  \state*{t_{n+1}}{\eta_{n+1}}{p_{n+1}}$ so that
  $\state*{t_{n+1}}{\eta_{n+1}}{p_{n+1}} \in
  F(A,B,\omega,\omega)$. Since $\forwardf{}$ is deterministic, this
  defines the desired infinite sequence of reductions. Moreover, since
  $\state*{t_{i}}{\eta_i}{p_{i}} \in F(A,B,\omega,\omega)$ for each
  $i$, and the index $\alpha$ decreases for every step of the form
  \ref{item:livenessA}, we know that only finitely many reduction
  steps after $\state*{t_{i}}{\eta_{i}}{p_{i}}$ are of the form
  \ref{item:livenessA}. Thus, there is a $j \ge i$ with
  $p_j \neq p_{i}$. In addition, given that $A$ and $B$ are value
  types, so is $A+B$, and we thus obtain by
  \autoref{lem:vinterp_value} that
  $\hastype{}{\interm_{p_i}\,v_i}{A+B}$ for all $i \ge 1$.
\end{proof}

\subsubsection{Causality}
\label{sec:causality}

We conclude by sketching the proofs for the corresponding operational
properties for the reactive step semantics. The proof idea is the same
but instead of setting heaps in $\ol\eta$ to the empty heap, we
construct $\ol\eta$ so that it contains the input stream.

Recall that after each step of the (reactive) step semantics, the
machine starts a new empty `later' heap. Each heap comes with its own
namespace. Let $l_i$ be the location that is picked as the first
location by the allocator after $i$ steps of the (reactive) step
semantics, i.e.\ $\allocate{\eta_i} = l_i$ where $\eta_i$ is the empty
heap after $i$ steps. Given a value $v$ and number $i\ge 0$, we write
$\eta^i_v$ for the heap $l_i \mapsto v :: l_{i+1}$. We further define
the following set of heap sequences:
\begin{align*}
  H(A,i) &= \setcom{\eta^i_{v_i};\eta^{i+1}_{v_{i+1}};\dots}{\forall j \ge
    i. \hastype{}{v_j}{A}}
\end{align*}
The locations $l_i$ are used to feed input to the reactive step
semantics. The following lemma shows that each $l_i$ has the right
semantic type:
\begin{lemma}
  \label{lem:placeholderVinterp}
  For all $i \ge 0$, $\hastype{}{v}{A}$, and $\ol\eta\in H(A,i+1)$,
  we have that
  $l_i \in \vinterpN{\Later (\Str A)}{\eta^i_v,\ol\eta,\omega,\beta}$.
\end{lemma}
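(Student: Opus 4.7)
The plan is to prove the statement by transfinite induction on $\beta$, strengthening the induction hypothesis to quantify universally over $i$, $v$ with $\hastype{}{v}{A}$, and $\ol\eta\in H(A,i+1)$. The base case $\beta = 0$ is immediate: by definition, $\vinterpN{\Later\Str A}{\eta^i_v,\ol\eta,\omega,0} = \dom{\gc{\eta^i_v}} = \dom{\eta^i_v} = \set{l_i}$, so $l_i$ is trivially a member.

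For the successor case $\beta = \beta'+1$, decompose $\ol\eta = \eta^{i+1}_{v_{i+1}};\ol\eta''$ with $\ol\eta'' \in H(A,i+2)$. By the definition of the value relation for $\Later$ at a successor ordinal, it suffices to show $\adv\,l_i \in \tinterpN{\Str A}{\eta^i_v \tick \eta^{i+1}_{v_{i+1}},\ol\eta'',\omega,\beta'}$. Support of $\adv\,l_i$ in the store is immediate since $l_i \in \dom{\eta^i_v}$. By the evaluation rule for $\adv$, we look up $\eta^i_v(l_i) = v :: l_{i+1}$, which is itself a value, so $\adv\,l_i$ reduces to $v :: l_{i+1}$ with the store unchanged. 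Unfolding $\Str A \iso A \times \Later\Str A$, membership of $v :: l_{i+1}$ in the value relation decomposes into two requirements: first, that $v \in \vinterpN{A}{\eta^i_v\tick\eta^{i+1}_{v_{i+1}},\ol\eta'',\omega,\beta'}$, which follows from \autoref{lem:vinterp_value} together with $\hastype{}{v}{A}$; and second, that $l_{i+1} \in \vinterpN{\Later\Str A}{\eta^i_v\tick\eta^{i+1}_{v_{i+1}},\ol\eta'',\omega,\beta'}$. For the latter I would apply the induction hypothesis at $\beta'$ with $i+1$, $v_{i+1}$, and $\ol\eta''$ to obtain $l_{i+1} \in \vinterpN{\Later\Str A}{\eta^{i+1}_{v_{i+1}},\ol\eta'',\omega,\beta'}$, and then invoke Kripke monotonicity (\autoref{lem:kripke}) using the relation $\eta^{i+1}_{v_{i+1}} \tickle \eta^i_v\tick\eta^{i+1}_{v_{i+1}}$ to extend membership to the larger store.

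For the limit case, the value relation of $\Later$ at a limit ordinal is defined as the intersection of its values at strictly smaller ordinals. Each of those instances follows from the induction hypothesis at the appropriate $\beta' < \beta$, so $l_i$ lies in the intersection.

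The main obstacle will be in the successor case, where one must carefully track which heap is accessed during evaluation of $\adv$, recognise that the lookup $v :: l_{i+1}$ is already a value (so no further steps are needed), and ensure the Kripke monotonicity step correctly transports the induction hypothesis from the smaller world with store $\eta^{i+1}_{v_{i+1}}$ to the current world with store $\eta^i_v \tick \eta^{i+1}_{v_{i+1}}$, leaving the remaining input tail $\ol\eta''$ untouched. Beyond this bookkeeping, the structure of the argument is straightforward.
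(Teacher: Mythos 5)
Your induction on $\beta$, with the hypothesis strengthened over $i$, $v$ and $\ol\eta$, is exactly the paper's route (the paper proves this by "a straightforward induction on $\beta$"), and the case analysis — base, successor with the $\adv$ evaluation and the Kripke step along $\eta^{i+1}_{v_{i+1}} \tickle \eta^i_v\tick\eta^{i+1}_{v_{i+1}}$, and the intersection at limits — is all sound. The one misstep is the appeal to \autoref{lem:vinterp_value} to conclude $v \in \vinterpN{A}{\eta^i_v\tick\eta^{i+1}_{v_{i+1}},\ol\eta'',\omega,\beta'}$: that lemma is only stated for \emph{value types} $A$, whereas neither the present lemma nor its use in \autoref{thr:causality} restricts the input type $A$ in this way. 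The paper instead invokes \autoref{cor:lrl} here: since $v$ is a closed well-typed value it lies in $\tinterpN{A}{w}$ for every world $w$, and being a value (hence evaluating to itself, and location-free so trivially supported) it lies in $\vinterpN{A}{w}$. With that substitution your argument goes through for arbitrary $A$.
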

The lemma is proved by a straightforward induction on $\beta$ using
Corollary~\ref{cor:lrl}.

We can now prove variants of Lemma~\ref{lem:productivity},
\ref{lem:termination}, and \ref{lem:fairness} for the reactive step
semantics. For the reactive step semantics of streams
$\forwards*{}{}$, we define the corresponding set of machine
configurations that are safe according to the logical relation as follows:
\[
  S^R(\ol\eta,B,\beta) = \setcom{\stateS{t}{\eta}{l_i}}{\ol\eta =
    \eta^i_v;\eta^{i+1}_w;\ol\eta'\land t\in \tinterpN{\Str B}{(\eta,\eta^i_v\tick\eta^{i+1}_w),\ol\eta',\omega,\beta}}
\]
This construction takes as additional parameter $\ol\eta$ drawn from
$H(A,i)$ which represents future input. We can then formulate the
corresponding productivity property as follows:
\begin{lemma}
  \label{lem:causality}
  Given $\stateS{t}{\eta}{l_i} \in S^R((\eta^i_v;\ol\eta),B,\beta+1)$
  and $\ol\eta\in H(A,i+1)$, then there are
  $\stateS{t'}{\eta'}{l_{i+1}} \in S^R(\ol\eta,B,\beta)$ and $v'$ such
  that
    \[
      \stateS{t}{\eta}{l_i} \forwards*{v}{v'} \stateS{t'}{\eta'}{l_{i+1}}.
    \]

    Moreover, if $B$ is a value type then $\hastype{}{v'}{B}$.
\end{lemma}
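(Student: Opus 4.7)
The plan is to mirror the proof of \autoref{lem:productivity}, with the additional bookkeeping needed to track the future input heaps in $\ol\eta$. First I would unfold the definition of $S^R$ together with the fact that $\ol\eta \in H(A,i+1)$ to extract a decomposition $\ol\eta = \eta^{i+1}_w;\eta^{i+2}_x;\ol\eta''$ and the membership $t \in \tinterpN{\Str B}{(\eta, \eta^i_v\tick\eta^{i+1}_w), \eta^{i+2}_x;\ol\eta'', \omega, \beta+1}$. By definition of the term interpretation, there exist a store $\sigma'$ and a value $v_0$ with $\heval{t}{(\eta, \eta^i_v\tick\eta^{i+1}_w)}{v_0}{\sigma'}$ and $v_0 \in \vinterpN{\Str B}{\sigma', \eta^{i+2}_x;\ol\eta'', \omega, \beta+1}$. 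Unfolding $\Str B \iso B \times \Later(\Str B)$ gives $v_0 = \into\pair{v'}{w'}$ with $v' \in \vinterpN{B}{\sigma', \eta^{i+2}_x;\ol\eta'', \omega, \beta+1}$ and $w' \in \vinterpN{\Later\Str B}{\sigma', \eta^{i+2}_x;\ol\eta'', \omega, \beta+1}$. The successor-ordinal case of the $\Later$ clause then forces $w'$ to be a location and yields $\adv\, w' \in \tinterpN{\Str B}{\gc{\sigma'}\tick\eta^{i+2}_x, \ol\eta'', \omega, \beta}$.

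Next I would connect this evaluation in the logical relation to an actual reactive step. The rule for $\forwards*{}{}$ runs $t$ in the store $(\eta, \eta^i_v\tick l_{i+1}\mapsto\unit)$, while the logical relation fed $t$ the store $(\eta, \eta^i_v\tick\eta^{i+1}_w)$ with $\eta^{i+1}_w = l_{i+1}\mapsto w::l_{i+2}$ instead. The two later heaps share their namespace and domain but differ in content; since the rules of \autoref{fig:machine} read only from the now heap (via $\adv$) and use the later heap only for allocation and writes (via $\delay$), the evaluation produces the same value $v_0$ modulo renaming of freshly allocated locations. Combining this observation with the closure of the logical relation under renamings and under $\tickle$-extensions of the store (\autoref{lem:kripke}), together with Lemma~\ref{lem:placeholderVinterp} to justify that the output carries the correct placeholder, gives $\stateS{t}{\eta}{l_i} \forwards*{w}{v'} \stateS{\adv\, w'}{\gc{\sigma'}}{l_{i+1}}$.

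Finally I would verify that $\stateS{\adv\, w'}{\gc{\sigma'}}{l_{i+1}} \in S^R(\ol\eta,B,\beta)$. By namespace disjointness $\gc{\sigma'}$ decomposes as $(\eta_*, \eta^{i+1}_w)$ where $\eta_*$ consists of the allocations performed while evaluating $t$; re-partitioning the store as $(\eta_*, \eta^{i+1}_w\tick\eta^{i+2}_x)$ and applying the membership $\adv\, w' \in \tinterpN{\Str B}{\gc{\sigma'}\tick\eta^{i+2}_x, \ol\eta'', \omega, \beta}$ obtained above gives exactly what the definition of $S^R(\ol\eta,B,\beta)$ requires. The output $v'$ already lies in $\vinterpN{B}{\sigma', \eta^{i+2}_x;\ol\eta'', \omega, \beta+1}$, and if $B$ is a value type then \autoref{lem:vinterp_value} yields $\hastype{}{v'}{B}$.

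I expect the main obstacle to be the second step, namely the careful alignment of the machine's placeholder representation of future input with the semantic representation used in the logical relation. Although the support, renaming and namespace-disjointness machinery developed earlier in the paper should make this routine in principle, the bookkeeping — distinguishing which portion of $\gc{\sigma'}$ comes from the pre-loaded heap $\eta^{i+1}_w$ and which comes from allocations during evaluation of $t$, and propagating the renaming of the returned location $w'$ to the fresh machine location $l_{i+1}$ — is delicate and is where a proof attempt is most likely to require an auxiliary lemma about insensitivity of evaluation to the content of the later heap.
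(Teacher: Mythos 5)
Your proposal is correct and follows essentially the same route as the paper's proof: unfold $S^R$ to obtain the term-relation membership with the future input $\eta^{i+1}_w$ pre-loaded in the later heap, evaluate, peel off the head and tail of the resulting stream value via the successor case of the $\Later$ clause, and bridge to the machine's actual run via the observation that evaluation only allocates in (never reads from) the later heap, so replacing $l_{i+1}\mapsto w :: l_{i+2}$ by the dummy $l_{i+1}\mapsto\unit$ changes neither the resulting value nor the allocations. Only cosmetic slips remain: the consumed input in your displayed step should be $v$ (the value stored at $l_i$), not $w$; the successor configuration's heap is the machine's $\eta_L$ of fresh allocations rather than all of $\gc{\sigma'}$ (the definition of $S^R$ re-adjoins $\eta^{i+1}_w$ for you); and \autoref{lem:placeholderVinterp} is needed to set up the initial configuration in the proof of \autoref{thr:causality} rather than inside this lemma, where the semantic typing of the tail already follows from the value-relation membership of the evaluated stream.
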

The constructions for $\forwardu*{}{}$ and $\forwardf*{}{}$ are
analogous and we can then prove the causality property of the reactive
step semantics.
\begin{proof}[Proof of \autoref{thr:causality}]
  We give the proof for part \ref{item:causalityStr} of the
  theorem. Part \ref{item:causalityUntil} and \ref{item:causalityFair}
  follow by a similar adaptation of the proofs of
  Theorems~\ref{thr:termination} and \ref{thr:fairness}, respectively.
  By Corollary~\ref{cor:lrl},
  $\unbox\,t \in \tinterpN{\Str A \to \Str
    B}{\eta^0_{v_1}\tick\eta^1_{v_2},\ol\eta,\omega,\beta}$ for all $\beta$ and for
  $\ol\eta = \eta^2_{v_3};\eta^3_{v_4};\dots$. By
  Lemma~\ref{lem:placeholderVinterp}
  $\adv\,l_0 \in \tinterpN{\Str A}{\eta^0_{v_1}\tick\eta^1_{v_2},\ol\eta,\omega,\beta}$
  and thus
  $\unbox\,t\, (\adv\,l_0) \in \tinterpN{\Str
    B}{\eta^0_{v_1}\tick\eta^1_{v_2},\ol\eta,\omega,\beta}$. Hence, we have
  $\stateS{\unbox\,t\, (\adv\,l_0)}{\emptyset}{l_0} \in
  U(\eta^0_{v_1};\eta^1_{v_2};\ol\eta,B,\beta)$ for any $\beta$. Similarly to the proof of
  \autoref{thr:productivity}, we can then use \autoref{lem:causality}
  to construct the desired infinite sequence of reduction steps.
\end{proof}

\section{Related work}
\label{sec:related:work}

%

The work by \citet{cave14fair} mentioned in the introduction defines a language with a modal operator $\Delay$
as well as inductive and coinductive types, but no guarded fixed points. They define a family of reduction
relations indexed by ordinals up to and including $\omega$. The relations corresponding to finite ordinals 
describe reductions up to finitely many steps, and the one at $\omega$ describes global behaviour. 
They give an interpretation of types as predicates on values indexed by ordinals up to and including $\omega$,
and similarly to our interpretation of types, the interpretation of $\Delay A$ at $\omega$ refers to the 
interpretation of $A$ also at $\omega$. Using this they prove strong normalisation, and sketch proofs of 
causality, productivity and liveness, but they do not prove lack of space leaks as done here. 
The motivation for omitting the guarded fixed point operator is exactly
the observation mentioned in the introduction that these equate inductive and coinductive types. Instead,
programming with coinductive types like streams must be done by coiteration. The present paper shows how
to refine the modal type system to combine the type system of LTL with the power of the fixed point operator, 
gaining simplicity in programming and productivity checking. The language of \citet{cave14fair} has more
general inductive and coinductive types than Lively RaTT (but not general guarded recursive types), see 
discussion in \autoref{sec:conclusion}. 
%
The idea of transfinite step indexing as used both here and by \citet{cave14fair}, 
has also been used to model countable non-determinism~\citep{bizjak2014model}
and distinguishing between logical and and concrete steps in program verification~\citep{svendsen2016transfinite}.
 
\citet{jeffrey2012} and \citet{Jeltsch2012} independently discovered the connection between FRP and
LTL. \citet{Jeltsch2012,jeltsch2013temporal} 
studied a category theoretic common notion of models of LTL and FRP. 
\citet{jeffrey2012} defined a language for FRP as an abstraction of a model defined in a functional 
programming language. Signals are defined directly as time-dependent values and LTL types are defined by
quantifying over time. While the native function space of the language contains all signal functions, a 
type of causal functions is definable in the language. In later work, \citet{jeffrey2014} extends modal FRP with
heterogeneous stream types, i.e., streams of elements whose types are given by a stream of types, and use 
this to encode past-time LTL. Unlike the present work, neither Jeltsch, nor Jeffrey define 
an operational semantics of programs, and therefore prove no operational metatheoretical results. 

To our knowledge, the first work to define a modal type theory for FRP with a guarded fixed point operator 
is that of \citet{krishnaswami2011ultrametric}. This line of work also studies type systems for 
eliminating implicit space and time leaks.
\citet{krishnaswami2012higher} use linear types to statically bound the size of the dataflow graph 
generated by a reactive program, while \citet{krishnaswami13frp} defines a simpler type system, but
rules out space leaks using the techniques also used in the present paper.
\citet{bahr2019simply} recast this work in the setting a Simply RaTT, which unlike \citet{krishnaswami13frp}
uses Fitch style for programming with modal types, and extend these results by identifying and 
eliminating a type of time leaks stemming from fixed points. 

The guarded fixed point operator was first suggested by \citet{nakano2000} and has since received
much attention in logics for program verification because it be used as a synthetic approach~\citep{ToT,appel2007very}
to step-indexing~\citep{appel01indexed}. 
Moreover, combining this with a notion of quantification over clocks~\citep{atkey2013productive} or a constant 
modality~\citep{clouston2016guarded} one can use guarded recursion
to encode coinduction. Guarded recursion forms part
of the foundation of the framework Iris~\citep{jung2015iris} for higher-order concurrent separation logic in Coq,
and a number of dependent type theories with guarded recursion have been 
defined~\citep{bahr2017clocks,BirkedalL:gdtt-conf,GCTT}. In the simply typed setting \citet{guatto2018} extends
this with a notion of time warps. The combination of guarded recursion and higher inductive types \citep{hottbook}
has also been used for modelling process calculi \citep{mogelbergPOPL2019,veltri2020formalizing}. 
Although related to the modal FRP calculi, these 
systems are usually much more expressive, since space and time leaks are ignored in their design. For example,
they all include an operation $A \to\Later A$ transporting data into the future, a known source of space leaks. 

\section{Conclusion and future work}
\label{sec:conclusion}

This paper shows how guarded fixed points can be
combined with liveness properties in modal FRP.
%
While properties such as termination, liveness and fairness are perhaps beyond the scope
of properties traditionally expressed in simply typed programming languages, they could naturally
occur as parts of program specifications in dependently typed languages
and proof assistants. 
%
%
We therefore view
Lively RaTT as a conceptual stepping stone towards a dependently typed language for reactive programming. 

The results of this paper have been presented in the setting of functional reactive programming, but 
we expect that the ideas will be relevant also in the setting of guarded recursion as described in 
\autoref{sec:related:work}. In these settings, the fact that inductive and coinductive types coincide 
means that termination cannot be expressed directly. This leads to limitations in the setting of program
verification, e.g., when defining notions such as weak bisimulation for programs~\citep{paviotti2019} 
and processes. We expect that the tools developed here can be used in this respect once this work
has been adapted to guarded recursion and extended to dependent types.


Future work also includes extending Lively RaTT with general classes of inductive types. Note that
as seen here one must distinguish between ordinary inductive types such as $\Nat$ and temporal ones
such as the $\Until$ types, where the recursion involves time steps and the recursors 
therefore need to be stable. The temporal inductive types should be defined as a class of strictly positive
inductive types where the recursion variable appear under a $\Delay$, generalising the $\Until$ types 
used here. One could likewise add a class of coinductive types in the ordinary sense, but the 
temporal coinductive types are subsumed by the guarded recursive types. 

%

%

\bibliography{paper}



\end{document}